\newcommand{\labitem}[2]{%
\def\@itemlabel{\textbf{#1}}
\item
\def\@currentlabel{#1}\label{#2}}
\newcommand{\throw}{\Uparrow}
\newcommand{\annotate}{annotate}
\newcommand{\labexp}{labexp}
\newcommand{\addlab}{addlab}
\newcommand{\calck}{calck}
\newcommand{\muni}{matchUni}
\newcommand{\lexrule}{islexrule}
\newcommand{\isunique}{isUniLex}
\newcommand{\fst}{FIRST}
\newcommand{\flw}{FOLLOW}
\newcommand{\notannot}{notlabel}
\newcommand{\ban}{ban}
\newcommand*\Let[2]{\State #1 $\gets$ #2}
\newcommand{\Epsi}{\varepsilon}
\newcommand{\Rp}{R^\prime}
\newcommand{\tm}[1]{\textrm{`{\tt #1}'}}
\newcommand{\lab}[2]{[#1]^{\tt #2}}
\newcommand{\labs}[2]{[#1]^{\cancel{\tt #2}}}
\newcommand{\labt}[2]{[{\tt #1}]^{\tt #2}}
\newcommand{\labst}[2]{[{\tt #1}]^{\cancel{\tt #2}}}
\newcommand{\inp}[1]{``\texttt{#1}"}
\newcommand{\fivespaces}{\;\;\;\;\;}
\newcommand{\tenspaces}{\fivespaces\fivespaces}
\newcommand{\mylabel}[1]{\, \mathbf{(#1)}}
\newcommand{\Lp}{\stackrel{\mbox{\tiny{PEG}}}{\leadsto}}
\newcommand{\Tup}[2]{(#1,\,#2)}
\newcommand{\Xp}{x^\prime}
\newcommand{\Xpp}{x^{\prime\prime}}
\newcommand{\Yp}{y^\prime}
\newcommand{\Wp}{w^\prime}
\newcommand{\Lfail}{{\tt fail}}
\newcommand{\Peg}[2]{#1[#2]}
\newcommand{\Pgg}[1]{\Peg{G}{#1}}
\newcommand{\Matk}[3]{#1\;\,#2\;\,#3\,}
\newcommand{\Matgk}[3]{\Matk{\Pgg{#1}}{#3}{#2}}
\newcommand{\Vlex}{V_{Lex}}
\newcommand{\Vsyn}{V_{Syn}}
\newtheorem{definition}{Definition}
\newtheorem{lemma}{Lemma}
\newenvironment{varalgorithm}[1]
  {\algorithm}
  {\endalgorithm}
\journal{Science of Computer Programming}
\begin{document}

\begin{frontmatter}

\author{Sérgio Queiroz de Medeiros}
\ead{sergiomedeiros@ect.ufrn.br}
\address{School of Science and Technology -- UFRN --
Natal -- Brazil}

\author{Gilney de Azevedo Alvez Junior}
\ead{gilneyjnr@gmail.com}
\address{Institute Digital Metropolis -- UFRN --
Natal -- Brazil}

\author{Fabio Mascarenhas}
\ead{mascarenhas@ufrj.br}
\address{Department of Computer Science -- UFRJ --
Rio de Janeiro -- Brazil}

\title{Automatic Syntax Error Reporting and Recovery in Parsing Expression Grammars}

\begin{abstract}
Error recovery is an essential feature for a parser that
should be plugged in Integrated Development Environments (IDEs),
which must build Abstract Syntax Trees (ASTs) even for syntactically
invalid programs in order to offer features such as automated refactoring
and code completion.

Parsing Expressions Grammars (PEGs) are a formalism that
naturally describes recursive top-down parsers using a restricted form
of backtracking. Labeled failures are a
conservative extension of PEGs that adds an error reporting
mechanism for PEG parsers, and these labels can also be associated
with recovery expressions to provide an error recovery mechanism.
These expressions can use the full expressivity of PEGs to recover
from syntactic errors.

Manually annotating a large grammar with labels and recovery expressions
can be difficult. In this work, we present two approaches, \emph{Standard}
and \emph{Unique}, to automatically annotate a PEG with labels, and to build
their corresponding recovery expressions. The Standard approach annotates a
grammar in a way similar to manual annotation, but it may insert labels
incorrectly, while the Unique approach is more conservative to annotate
a grammar and does not insert labels incorrectly. 

We evaluate both approaches by using them to generate error
recovering parsers for four programming languages: Titan, C, Pascal and Java.
In our evaluation, the parsers produced using the Standard approach,
after a manual intervention to remove the labels incorrectly added, gave
an acceptable recovery for at least $70\%$ of the files in each language.
By it turn, the acceptable recovery rate of the parsers produced via the Unique
approach, without the need of manual intervention, ranged from $41\%$ to $76\%$.
\end{abstract}

\begin{keyword}
parsing expression grammars \sep
labeled failures \sep 
error reporting \sep
error recovery
\end{keyword}

\end{frontmatter}


\section{Introduction}
\label{sec:intro}

Integrated Development Environments (IDEs) often require parsers
that can recover from syntax errors and build syntax trees
even for syntactically invalid programs, in other to conduct
further analyses necessary for IDE features such as automated refactoring
and code completion.

Parsing Expression Grammars (PEGs)~\cite{ford2004peg} are a formalism
used to describe the syntax of programming languages, as an alternative
for Context-Free Grammars (CFGs).
We can view a PEG as a formal description of a recursive top-down parser
for the language it describes. PEGs have a concrete syntax based on
the syntax of {\em regexes}, or extended regular expressions.
Unlike CFGs,
PEGs avoid ambiguities in the definition of the grammar's
language by construction, due to the use of an {\em ordered choice} operator.

The ordered choice operator naturally maps to restricted (or local)
backtracking in a recursive top-down parser.
The alternatives of a choice are tried in
order; when the first alternative recognizes an input prefix,
no other alternative of this choice is tried, but when an
alternative fails to recognize an input prefix, the parser
backtracks to the same input position it was before
trying this alternative and then tries the next one.

A naive interpretation of PEGs is problematic when dealing with
inputs with syntactic errors, as a failure during parsing an input
is not necessarily an error, but can be just an indication that the
parser should backtrack and try another alternative. 
Labeled failures~\cite{maidl2013peglabel,maidl2016peglabel} are
a conservative extension of PEGs that address this problem of
error reporting in PEGs by using explicit error labels, which are distinct
from a regular failure. We throw a label to signal an error
during parsing, and each label can then be tied to a
specific error message.

We can leverage the same labels to add an error recovery mechanism,
by attaching a recovery expression to each label. This expression
is just a regular parsing expression, and it usually skips the erroneous
input until reaching a synchronization point, while producing a
dummy AST node~\cite{medeiros2018sac,medeiros2018visual}.

Labeled failures produce good error messages and error recovery,
but they can add a considerable annotation burden in large grammars,
as each point where we want to signal and recover from a syntactic error
must be explicitly marked.

In a previous work~\cite{medeiros2018sblp}, we presented
the Algorithm~\ref{alg:standard}, which automatically annotates
a PEG with labels and builds their corresponding recovery expressions.
We evaluated the use of such algorithm to build an error recovering
parser for the Titan programming language. 

This paper extends the previous one by also evaluating the use
of Algorithm~\ref{alg:standard} to build error recovering parsers
for C, Pascal and Java.

As pointed out in~\cite{medeiros2018sblp}, 
Algorithm~\ref{alg:standard} may add some labels incorrectly,
which would prevent the parser from recognizing
syntactically valid programs.

In this paper we try to address this issue by proposing
the Algorithm~\ref{alg:unique}, which inserts labels in
a more conservative way. The use of Algorithm~\ref{alg:unique}
avoids the problem of adding labels incorrectly,
although it inserts less labels than Algorithm~\ref{alg:standard}.

Overall, our experiments show that Algorithm~\ref{alg:standard}
can be used to produce error recovering parsers with
the help of manual intervention, which was small in
case of our Titan, C, and Pascal grammars, and more
significant in case of Java. By its turn, 
Algorithm~\ref{alg:unique} can be used to
automatically generate functional error
recovering parsers, whose error recovery
quality is lower when compared to the
parsers got via Algorithm~\ref{alg:standard}.

The remainder of this paper is organized as follows:
Section~\ref{sec:pegs} discusses error recovery in PEGs
using labeled failures and recovery expressions;
Section~\ref{sec:algostandard} shows Algorithm~\ref{alg:standard},
which automatically annotates a PEG with labels and associates
a recovery expression to each label;
Section~\ref{sec:evalstandard} evaluates the use of
Algorithm~\ref{alg:standard} to annotate the grammars of
four programming languages: Titan, C,
Pascal, and Java; Section~\ref{sec:algocon} discusses
conservative approaches to insert labels and presents 
Algorithm~\ref{alg:unique}, which inserts only correct
labels; Section~\ref{sec:evalcon} compares the use of
both algorithms to annotate Titan, C, Pascal and Java grammars;
Section~\ref{sec:rel} discusses related work on error
reporting and error recovery;
finally, Section~\ref{sec:conc} gives some concluding remarks.

\section{Error Recovery in PEGs with Labeled Failures}
\label{sec:pegs}

In this section we present an introduction to labeled
PEGs and discuss how to build an error recovery mechanism for PEGs
by attaching a recovery expression to each labeled failure.

A labeled PEG $G$ is a tuple $(V,T,P,L,R,\Lfail, p_{S})$, where
$V$ is a finite set of non-terminals, $T$ is a finite set of terminals,
$P$ is a total function from non-terminals to parsing expressions,
$L$ is a finite set of labels, $R$ is a function from labels to
parsing expressions, $\Lfail \notin L$ is a failure label,
and $p_{S}$ is the initial parsing expression.
We will use the term {\em recovery expression} when referring to
the parsing expression associated with a given label.
We will assume that $V = \Vlex \cup \Vsyn$, where $\Vlex$
is the set of non-terminals that match lexical elements,
also known as tokens, and $\Vsyn$ represents the non-terminals
that match syntactical elements. When describing the PEG
for a given language, we will use names in uppercase for
the lexical non-terminals. From now on, unless otherwise noted,
we will use PEG as synonymous to labeled PEG.

We describe the function $P$ as a set of rules of the form
$A \leftarrow p$, where $A \in V$ and $p$ is a parsing
expression. A parsing expression $p$, when applied to an
input string $s$, either succeeds or fails.
When the matching of $p$ succeeds, it consumes a prefix
of the input and returns the remaining suffix,
and when it fails, it produces a label, associated
with an input suffix.

The abstract syntax of parsing expressions is as follows,
where $p$, $p_1$ and $p_2$ are parsing expressions:
$\Epsi$ represents the empty string, $a \in T$ denotes
a terminal, $A \in V$ represents a non-terminal,
$p_1 p_2$ is a concatenation, $p_1 \;/\; p_2$ is an
ordered choice, $p*$ indicates zero or more repetitions,
$!p$ is a negative predicate, and $\throw^l$ throws a
label $l \in L$. 

\begin{figure}[t!]
	{\scriptsize
	\begin{align*}
		& \textbf{Empty} \fivespaces
		{\frac{}{G[\varepsilon] \; R \; x \Lp x}} \mylabel{empty.1}
		\fivespaces
		\textbf{Non-terminal} \tenspaces
		{\frac{G[P(A)] \; R \; xy \Lp X}
			{G[A] \; R \; xy \Lp X}} \mylabel{var.1}
		\\ \\
		& \textbf{Terminal} \fivespaces
		{\frac{}{G[a] \; R \; ax \Lp x}} \mylabel{term.1} 
		\fivespaces
		{\frac{b \ne a}{G[b] \; R \; ax \Lp \Tup{\Lfail}{ax}}} \mylabel{term.2}
		\fivespaces
		{\frac{}{G[a] \; R \; \varepsilon \Lp \Tup{\Lfail}{\varepsilon}}} \mylabel{term.3}
		\\ \\
		& \textbf{Sequence} \fivespaces
		{\frac{G[p_1] \; R \; xy \Lp y   \fivespaces  G[p_2] \; R \; y \Lp X}
			{G[p_1 \; p_2] \; R \; xy \Lp X}}   \mylabel{seq.1}
        \fivespaces
		{\frac{G[p_1] \; R \; xy \Lp \Tup{f}{y}}
			{G[p_1 \; p_2] \; R \; xy \Lp \Tup{f}{y}}}   \mylabel{seq.2}
		\\ \\
		& \textbf{Ordered Choice} \fivespaces
		{\frac{\Matgk{p_1}{xy}{R} \Lp y}
        {\Matgk{p_1 \;\slash\; p_2}{xy}{R} \Lp y}} \mylabel{ord.1}
		\fivespaces
		{\frac{\Matgk{p_1}{xy}{R} \Lp \Tup{l}{y} \fivespaces l \neq \Lfail}
        {\Matgk{p_1 \;\slash\; p_2}{xy}{R} \Lp \Tup{l}{y}}} \mylabel{ord.2}
      \\ \\
		& \tenspaces
		{\frac{\Matgk{p_1}{xy}{R} \Lp \Tup{\Lfail}{y} \fivespaces \Matgk{p_2}{xy}{R} \Lp X}
      {\Matgk{p_1 \;\slash\; p_2}{xy}{R} \Lp X}} \mylabel{ord.3}
		\\ \\
		& \textbf{Repetition} \fivespaces
		{\frac{G[p] \; R \; xy \Lp \Tup{\Lfail}{y}}
			{G[p*] \; R \; xy \Lp xy}} \mylabel{rep.1}
		\fivespaces
		{\frac{G[p] \; R \; xy \Lp \Tup{l}{y} \fivespaces l \neq \Lfail}
			{G[p*] \; R \; xy \Lp \Tup{l}{y}}} \mylabel{rep.2}    
		\\ \\ 
		& \tenspaces \tenspaces
		{\frac{G[p] \; R \; xyz \Lp {yz} \fivespaces G[p*] \; R \; yz \Lp z}
			{G[p*] \; R \; xyz \Lp z}} \mylabel{rep.3}
    \\ \\
		& \textbf{Negative Predicate} \;\;\;
		{\frac{G[p] \; \{\} \; xy \Lp \Tup{f}{y}}
			{G[!p] \; R \; xy \Lp xy}} \mylabel{not.1}
		\fivespaces
		{\frac{G[p] \; \{\} \; xy \Lp y}
			{G[!p] \; R \; xy \Lp \Tup{\Lfail}{xy}}} \mylabel{not.2}
			\\ \\
	& \textbf{Recovery} \fivespaces
	{\frac{l \notin Dom(R)}{\Matgk{\throw^{l}}{x}{R} \Lp \Tup{l}{x}}} \mylabel{throw.1}
	\fivespaces
	{\frac{\Matgk{R(l)}{x}{R} \Lp X}
      {\Matgk{\throw^{l}}{x}{R} \Lp X}}  \mylabel{throw.2}
\end{align*}
	}
	\caption{Semantics of PEGs with labeled failures.}
	\label{fig:sempeglab}
\end{figure}

Figure~\ref{fig:sempeglab} presents the semantics of
labeled PEGs with error recovery as a set of inference
rules for a $\Lp$ function. The notation
$\Matgk{p}{xy}{R} \Lp y$ represents a successful matching
of the parsing expression $p$ in the context of a PEG
$G$ against the subject $xy$ with a map $R$ from labels
to recovery expressions, consuming $x$ and leaving the
suffix $y$. By its turn, the notation
$\Matgk{p}{xy}{R} \Lp \Tup{f}{y}$
represents an unsuccessful match of $p$,
where label $f \in L \cup \{ \Lfail \}$,
was thrown when trying to match the suffix $y$.
We will usually use $f$ to represent a label
in the set $L \cup \{ \Lfail \}$, and $l$ to
represent a label in $L$.
The notation $\Matgk{p}{xy}{R} \Lp X$
indicates that the matching result can be
either $y$ or $\Tup{f}{y}$.

The semantics given here is
essentially the same semantics for PEGs with labels
presented in previous work~\cite{medeiros2018visual,medeiros2018sac},
with two simplifications: neither we are tracking
the farthest failure, nor keeping a list of the
errors that occurred during a match. We did this
to make more amenable a formal discussion
about the correct insertion of labels.

We can see in Figure~\ref{fig:sempeglab} that failing
to match a terminal (rules {\bf term.2} and {\bf term.3})
gives us the label \Lfail, while a throw expression
(rules {\bf throw.1} and {\bf throw.2}) may give us a
label different from \Lfail. The recovery map $R$ is simply
passed along. The exceptions are the rules for the syntactic
predicate and for throwing labels.

A label $l \neq \Lfail$ thrown by $\throw^l$ cannot
be caught by an ordered choice or a repetition (rules {\bf ord.2} and {\bf rep.2}),
so it indicates an actual error during parsing, while \Lfail\;
is a regular failure and it indicates that the parser should backtrack. 
In the original formalization of PEGs~\cite{ford2004peg},
there is only label \Lfail, thus the parser always tries
to backtrack after failing to match a parsing expression.

The lookahead operator $!$ captures any label and turns it
into a success (rule {\bf not.1}), while turning a success into
a {\tt fail} label (rule {\bf not.2}). In both rules we 
used an empty recovered map to make sure that
errors are not recovered inside the predicate. 
The rationale is that errors inside a syntactic predicate are
expected and not actually syntactic errors in the input.

Rule {\bf throw.1} is related to error reporting, while
rule {\bf throw.2} is where error recovery happens.
$R(l)$ denotes the recovery expression associated with the label $l$.
When a label $l$ is thrown we check if $R$ has a recovery expression associated with it.
If it does not ({\bf throw.1}), the matching result is $l$ plus the
current input, and this error is propagated, so parsing finishes after
reaching the first syntactical error. 

If label $l$ has a recovery expression $R(l)$ (rule {\bf throw.2}),
we try to match the current input by using $R(l)$. As $R(l)$ is a regular
parsing expression, its matching may succeed, which essentially resumes
regular parsing, or may fail, which may finish the parsing or not
(the parser can still recover from this second error).

When a PEG does not throw labels via expression $\throw^l$, we say it is
an \emph{unlabeled PEG}, as the following definition states:

\begin{definition}[Unlabeled PEG]
A PEG $G = (V,T,P,L,R,\Lfail, p_{S})$ is unlabeled when $\forall A \in V$ we have that
expression $\throw^l$ does not appear in $P(A)$.
\end{definition}

In an unlabeled PEG $G$, the function $R$ is not relevant,
as no label different from $\Lfail$ is thrown and thus rules {\bf throw.1}
and {\bf throw.2} will not be used. In this case, the result of
a matching is more specific, as stated by the following lemma,
where $\Xp$ and $\Xpp$ are suffixes of $x$:

\begin{lemma}
\label{lem:unlabmatch}
Given an unlabeled PEG $G = (V,T,P,L,R,\Lfail, p_{S})$,
$\forall A \in V$, let $p$ be a subexpression of $P(A)$,
either $\Matgk{p}{x}{R} \Lp \Xp$, or $\Matgk{p}{x}{R} \Lp \Tup{\Lfail}{\Xpp}$.
\end{lemma}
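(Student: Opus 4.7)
The plan is to proceed by induction on the height of the derivation tree for the judgment $\Matgk{p}{x}{R} \Lp X$, carrying along the side condition that $p$ is a subexpression of some $P(A')$ with $A' \in V$. The aim is to rule out any conclusion of the form $\Tup{l}{\Xpp}$ with $l \in L$, so that only the two options listed in the statement remain.

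The base cases are immediate by inspection of Figure~\ref{fig:sempeglab}: rules \textbf{empty.1} and \textbf{term.1} yield a suffix of $x$, while \textbf{term.2} and \textbf{term.3} yield $\Tup{\Lfail}{\cdot}$. For most inductive cases (\textbf{seq.1}, \textbf{seq.2}, \textbf{var.1}, \textbf{ord.1}, \textbf{ord.3}, \textbf{rep.1}, \textbf{rep.3}, \textbf{not.1}, \textbf{not.2}) I would simply apply the inductive hypothesis to each premise to read off the shape of the conclusion; the side condition is preserved since any subpart of a subexpression of $P(A')$ is itself a subexpression of $P(A')$, and in \textbf{var.1} the invoked body $P(A'')$ is trivially a subexpression of itself, so the unlabeled property carries across the non-terminal step.

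The crucial cases are precisely the ones that could produce a non-$\Lfail$ label. Rules \textbf{throw.1} and \textbf{throw.2} both require $p = \throw^{l}$, but by the definition of an unlabeled PEG no $\throw^{l}$ occurs as a subexpression of any $P(A')$, so these rules are never applicable under the side condition. Rules \textbf{ord.2} and \textbf{rep.2}, in turn, each have a premise of the form $\Matgk{\cdot}{\cdot}{R} \Lp \Tup{l}{\cdot}$ with $l \neq \Lfail$; by the inductive hypothesis applied to that premise (whose subject is again a subexpression of $P(A')$) no such derivation exists, so these two rules are vacuously inapplicable as well. Every remaining applicable rule then produces either a suffix or $\Tup{\Lfail}{\cdot}$, as required.

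The main obstacle is essentially bookkeeping: one must be explicit about the side condition on $p$ and verify that every rule preserves it, especially \textbf{var.1}, where the hypothesis is re-established by jumping from a subexpression of one production to the body of another. Once this is set up carefully, ruling out \textbf{throw.1}/\textbf{throw.2} and \textbf{ord.2}/\textbf{rep.2} is immediate, and the remaining cases are routine applications of the inductive hypothesis.
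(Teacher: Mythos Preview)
Your proposal is correct and follows essentially the same approach as the paper: induction on the height of the derivation, observing that the $\throw^{l}$ rules never apply in an unlabeled PEG, with the remaining cases routine. Your write-up is considerably more explicit than the paper's two-sentence sketch---in particular, you spell out why \textbf{ord.2} and \textbf{rep.2} are vacuous and how the side condition on $p$ is maintained across \textbf{var.1}---but the underlying argument is the same.
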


\begin{proof}
By induction on the heights of the proof trees for
$\Matgk{p}{x}{R} \Lp \Xp$ and $\Matgk{p}{x}{R} \Lp \Tup{\Lfail}{\Xpp}$.
As $G$ is an unlabeled PEG, we do not have a case related
to expression $\throw^l$, which would involve rules {\bf throw.1}
and {\bf throw.2}. The proof related to the other
expressions is straightforward.
\end{proof}

Below, we discuss an example that illustrates how to deal with syntax
errors in PEGs by using labeled failures and recovery expressions.

\subsection{Handling Syntax Errors in PEGs}

\begin{figure}[t]
{\small
\centering
\begin{align*}
prog        & \leftarrow {\tt PUBLIC\;CLASS\;NAME\;LCUR\;PUBLIC\;STATIC\;VOID\;MAIN}\\
            & \;\;\;\;\;{\tt LPAR\;STRING\;LBRA\;RBRA\;NAME\;RPAR}\;blockStmt\;{\tt RCUR}\;{\tt EOF}\\
blockStmt   & \leftarrow {\tt LCUR}\;(stmt)*\;{\tt RCUR}\\
stmt        & \leftarrow ifStmt\;/\;whileStmt\;/\; printStmt\;/\;decStmt\;/\;assignStmt\;/\;blockStmt\\
ifStmt      & \leftarrow {\tt IF\;LPAR}\;exp\;{\tt RPAR}\;stmt\;({\tt ELSE}\;stmt\;/\;\Epsi)\\
whileStmt   & \leftarrow {\tt WHILE\;LPAR}\;exp\;{\tt RPAR}\;stmt\\
decStmt     & \leftarrow {\tt INT\;NAME}\;({\tt  ASSIGN}\;exp\;/\;\varepsilon)\;{\tt SEMI}\\
assignStmt  & \leftarrow {\tt NAME\;ASSIGN}\;exp\;{\tt SEMI}\\
printStmt   & \leftarrow {\tt PRINTLN\;LPAR}\;{\it exp}\;{\tt RPAR\;SEMI}\\
exp         & \leftarrow relExp \; ({\tt EQ} \; relExp)* \\
relExp      & \leftarrow addExp \; ({\tt LT} \; addExp)* \\
addExp      & \leftarrow mulExp \; (({\tt PLUS} \; / \; {\tt MINUS}) \; mulExp)* \\
mulExp      & \leftarrow atomExp \; (({\tt TIMES} \; / \; {\tt DIV}) \; atomExp)* \\
atomExp     & \leftarrow {\tt LPAR} \; exp \; {\tt RPAR} \; / \; {\tt NUMBER} \; / \; {\tt NAME}
\end{align*}
}
\vspace{-0.7cm}
\caption{A PEG for a tiny subset of Java}
\label{fig:javagrammar}
\end{figure}

In Figure~\ref{fig:javagrammar} we can see a PEG for a tiny subset of Java,
where lexical rules (shown in uppercase) have been elided. 
While simple (this PEG is almost equivalent to an LL(1) CFG),
this subset is a good starting point to discuss error recovery
in the context of PEGs.

To get a parser with error recovery, we first need to have
a parser that correctly reports errors. One popular error
reporting approach for PEGs is to report the farthest
failure position~\cite{ford2002packrat,maidl2016peglabel},
an approach that is supported by PEGs with
labels~\cite{medeiros2018sac}. However, the use of
the farthest failure position makes it harder to recover from
an error, as the error is only known after parsing finishes and
all the parsing context at the moment of the error has been lost.
Because of this, we will focus on using labeled failures for
error reporting in PEGs.

We need to annotate our original PEG with labels,
which indicate the points where we can signal a
syntactical error.
Figure~\ref{fig:javalabels} annotates the PEG of Figure~\ref{fig:javagrammar} (except for the \textit{prog} rule). 
The expression $[p]^{l}$ is syntactic sugar for $(p \; / \; \throw^{l})$. 
It means that if the matching of $p$ fails we should
throw label $l$ to signal an error.

\begin{figure}[t]
{\small
\begin{align*}
prog        & \leftarrow {\tt PUBLIC\;CLASS\;NAME\;LCUR\;PUBLIC\;STATIC\;VOID\;MAIN} \\
            & \;\;\;\;\;\;\;\;{\tt LPAR\;STRING\;LBRA\;RBRA\;NAME\;RPAR}\;\it{blockStmt}\;{\tt RCUR}\;{\tt EOF}\\
blockStmt   & \leftarrow {\tt LCUR}\;(stmt)*\;\labt{RCUR}{rcurlyblk}\\
stmt        & \leftarrow {\it  ifStmt\;/\;whileStmt\;/\; printStmt\;/\;decStmt\;/\;assignStmt\;/\;blockStmt}\\
ifStmt      & \leftarrow {\tt IF}\;\labt{LPAR}{lparif}\;\lab{exp}{condif}\;\labt{RPAR}{rparif}\;\lab{stmt}{thenstmt}\;({\tt ELSE}\;\lab{\it stmt}{elsestmt}\;/\;\Epsi)\\
whileStmt   & \leftarrow {\tt WHILE}\;\labt{LPAR}{lparwhile}\;\lab{exp}{condwhile}\;\labt{RPAR}{rparwhile}\;\lab{stmt}{bodywhile}\\
decStmt     & \leftarrow {\tt INT}\;\labt{NAME}{namedec}\;({\tt ASSIGN}\;\lab{exp}{expdec}\;/\;\varepsilon)\;\labt{SEMI}{semidec}\\
assignStmt  & \leftarrow {\tt NAME}\;\labt{ASSIGN}{assign}\;\lab{exp}{rval}\;\labt{SEMI}{semiassign}\\
printStmt   & \leftarrow {\tt PRINT}\;\labt{LPAR}{lparprint}\;\lab{exp}{expprint}\;\labt{RPAR}{rparprint}\;\labt{SEMI}{semiprint}\\
exp         & \leftarrow relExp \; ({\tt EQ} \; \lab{relExp}{relexp})* \\
relExp      & \leftarrow addExp \; ({\tt LT} \; \lab{addExp}{addexp})* \\
addExp      & \leftarrow mulExp \; (({\tt PLUS} \; / \; {\tt MINUS}) \; \lab{mulExp}{mulexp})* \\
mulExp      & \leftarrow atomExp \; (({\tt TIMES} \; / \; {\tt DIV}) \; \lab{atomExp}{atomexp})* \\
atomExp     & \leftarrow {\tt LPAR} \; \lab{exp}{parexp} \; \labt{RPAR}{rparexp} \; / \; {\tt NUMBER} \; / \; {\tt NAME}
\end{align*}
}
\vspace{-0.7cm}
\caption{A PEG with labels for a small subset of Java}
\label{fig:javalabels}
\end{figure}

The strategy we used to annotate the grammar was to annotate every
symbol (terminal or non-terminal) in the right-hand side of a production
that should not fail, as a failure would just make the whole parser either fail
or not consume the input entirely. For a nearly LL(1) grammar, like the
one in our example, that means all symbols in the right-hand side
of a production, except the first one.
We apply the same strategy when the right-hand side has a choice
or a repetition as a subexpression. 

We can associate each label with an error message.
For example, in rule \textit{whileStmt} the label \texttt{rparwwhile}
is thrown when we fail to match a closing parenthesis, so we could
attach an error message like ``{\tt missing ')' in while}''
to this label. Dynamically, when the matching of {\tt RPAR}
fails and we throw \texttt{rparwhile}, we could enhance this
message with information related to the input position
where this error happened.

Let us consider the example Java program from Figure~\ref{fig:javaerror},
which has two syntax errors: a missing `\textbf{)}' at line 5, and a 
missing `\textbf{;}' at the end of line 7. For this program,
a parser based on the labeled PEG from Figure~\ref{fig:javalabels}
would give us a message like:
\begin{verbatim}
    factorial.java:5: syntax error, missing ')' in while
\end{verbatim}

\begin{figure}[t]
{\small
\begin{verbatim}
    1  public class Example {
    2    public static void main(String[] args) {
    3      int n = 5;
    4      int f = 1;
    5      while(0 < n {
    6        f = f * n;
    7        n = n - 1
    8      }
    9      System.out.println(f);
    10   }
    11 }
\end{verbatim}
}
\caption{A Java program with syntax errors}
\label{fig:javaerror}
\end{figure}

The second error will not be reported because the parser
did not recover from the first one, since {\tt rparwhile} still has
no recovery expression associated with it.

The recovery expression $p_r$ of an label $l$ matches the input from
the point where $l$ was thrown. If $p_r$ succeeds
then regular parsing is resumed as if the label had not been thrown.
Usually $p_r$ should just skip part of the input until is safe to resume parsing.
In rule \textit{whileStmt}, we can see that after the
`\textbf{)}' we expect to match a \textit{stmt}, so the
recovery expression of label \texttt{rparwhile} could skip the
input until it encounters the beginning of a statement.

In order to define a safe input position to resume parsing,
we will use the classical $\fst$ and $\flw$ sets. 
A more detailed discussion about $\fst$ and
$\flw$ sets in the context of PEGs can be found
in other papers~\cite{redz09,redz14,mascarenhas2014}.

With the help of these sets, we can define the following recovery
expression for \texttt{rparwhile}, where
\textit{eatToken} is a rule that matches an input token:
\begin{align*}
(!{\tt \fst(stmt)} \;eatToken)*
\end{align*}

Now, when label \texttt{rparwhile} is thrown,
its recovery expression matches the input until
it finds the beginning of a statement, and
then regular parsing resumes. 

The parser will now also throw label {\tt semiassign}
and report the second error, 
the missing semicolon at the end of line 7.
In case {\tt semiassign} has an associated recovery
expression, this expression will be used to try
to resume regular parsing again.

Even our toy grammar has 26 distinct labels, each needing
a recovery expression to recover from all possible
syntactic errors. While most of these expressions are
trivial to write, this is still burdensome, and for
real grammars the problem is compounded by the fact that
they can easily need a small multiple of this number of labels.
In the next section, we present an approach to automatically
annotate a grammar with labels and recovery expressions
in order to provide a better starting point for larger grammars.

\section{Automatic Insertion of Labels and Recovery Expressions}
\label{sec:algostandard}

The use of labeled failures trades better precision in error messages, and
the possibility of having error recovery, for an increased annotation burden,
as the grammar writer is responsible for annotating the grammar with the
appropriate labels. In this section, we show how this process can be
partially automated.

To automatically annotate a grammar, we need to determine when it is safe
to signal an error: we should only throw a label after expression $p$ fails
if that failure {\em always} implies that the whole parse will fail
or not consume the input entirely, so it is useless to backtrack.

This is easy to determine when we have a nearly $LL(1)$ grammar, as is the case
with the PEG from Figure~\ref{fig:javagrammar}. As we mentioned in
Section~\ref{sec:pegs}, for an $LL(1)$ grammar the general rule is
that we should annotate every symbol (terminal or non-terminal)
in the right-hand side of a production after consuming at least
one token, which in general leads to annotating every symbol 
in the right-hand side of a production except the first one.

Although many PEGs are not $LL(1)$, we can use this approach to
annotate what would be the $LL(1)$ parts of a non-$LL(1)$ grammar.
We will discuss some limitations of this approach in the next section,
when we evaluate its application to annotate PEG-based parsers
for the programming languages Titan, C, Pascal and Java.

While annotating a PEG with labels we can add an automatically
generated recovery expression for each label, based on the tokens
that could follow it. We assume the tokens of a grammar are described
by the non-terminals $A \in \Vlex$. Moreover, we also assume that
at most one non-terminal $A \in \Vlex$ matches a prefix of the
current input, as stated by the following definition:

\begin{definition}[Unique token prefix]
An unlabeled PEG $G \;=\; (\Vlex \cup \Vsyn, T, P, L, R, \Lfail, p_{S})$ has the
unique token prefix property iff $\Matgk{A}{axy}{R} \Lp y$, where $A \in \Vlex$,
then $\forall B \in \Vlex$, where $B \neq A$, we have that $\Matgk{B}{axy}{R} \Lp \Tup{\Lfail}{\Xp}$,
where $\Xp$ is a suffix of $axy$.
\end{definition}

In the above definition, we assumed an unlabeled PEG 
to make sure we would not recover from an error when
matching a lexical non-terminal. Alternatively, we could
have considered above a labeled PEG with an empty recovery
function.

By assuming a grammar with the unique token prefix property
we did not have to worry about which lexical non-terminal
should come first in a choice (e.g., an alternative that
matches \inp{=} can come before one that matches \inp{==}).
Such property is useful, for example, when automatically
computing a choice with the tokens that a recovery expression
should match. The unique token prefix property can be easily
achieved with the help of predicates. For example, we could
define a non-terminal to match input \inp{=} as $ATRIB \leftarrow \tm{=} \; !\tm{=}$.

Moreover, when a PEG $G$ has the unique token prefix property
the sequence of tokens matched for a given input is unique, 
as stated below, where we assumed, as previously, an unlabeled
PEG to avoid recovering in case of an error:

\begin{lemma}[Unique token sequence]
\label{lem:uniseq}
Given an unlabeled PEG $G \;=\; (\Vlex \cup \Vsyn,T,P,L,R,\Lfail, p_{S})$, with
the unique token prefix property, and a subject $w$, the sequence in which the
lexical non-terminals in $\Vlex$ match $w$ is unique.
\end{lemma}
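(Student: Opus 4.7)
The plan is to prove the statement by strengthening it to a structural induction over subexpressions. Concretely, I would show that for every subexpression $p$ of $G$ and every input $x$, whenever $\Matgk{p}{x}{R} \Lp y$ with consumed prefix $z$ such that $x = zy$, the sequence of lexical non-terminal matches occurring on the committed subproofs and whose consumed prefixes concatenate to $z$ is uniquely determined by $p$ and $x$. Specialising this to $p = p_S$ and $x = w$ would yield the lemma.

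First, I would establish determinism of the PEG semantics of Figure~\ref{fig:sempeglab}: for any $p$ and $x$ there is at most one $Y$ with $\Matgk{p}{x}{R} \Lp Y$. This is a routine induction on proof-tree height, since for each syntactic form of $p$ the applicable rules are pairwise mutually exclusive once the result of the leftmost premise is fixed, and for an unlabeled PEG the two throw rules never apply. Determinism alone fixes the entire proof tree for $\Matgk{p_S}{w}{R} \Lp y$, and with it the sequence of lexical non-terminal matches along the spine of that proof.

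The second step is the main induction on $p$. The atomic cases are immediate: a lexical non-terminal contributes a singleton whose consumed prefix is unique by determinism, and by the unique token prefix property no other $B \in \Vlex$ could have matched at that position, so the token at that position is determined by $x$ alone. For compound expressions (sequence, ordered choice, repetition) I would concatenate the sequences given by the induction hypothesis applied to the committed premises, using Lemma~\ref{lem:unlabmatch} to observe that in an unlabeled PEG the only failure label is $\Lfail$, so the active branch of a choice or the number of iterations of a repetition is forced by the deterministic sub-results. For the negative predicate the committed sequence is empty, since on success the subject is restored and no lexical match inside the predicate ends up in the eventually accepted prefix.

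The main obstacle I anticipate is giving a precise definition of ``the sequence in which the lexical non-terminals match $w$'' that cleanly handles backtracking and predicates: lexical matches that succeed inside a failed alternative of an ordered choice, or inside any syntactic predicate, should not appear in the sequence even though their subproofs exist. I plan to resolve this by defining the sequence inductively over the shape of the proof tree, taking the concatenation from premises only when the enclosing rule commits to its result (rules \textbf{seq.1}, \textbf{ord.1}, \textbf{rep.3}, \textbf{var.1}) and yielding the empty sequence for \textbf{not.1}, the failing first premise of \textbf{ord.3}, and the failing premise of \textbf{rep.1}.
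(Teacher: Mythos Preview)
Your proposal is correct but takes a considerably more elaborate route than the paper. The paper's entire proof is a three-line contradiction: if two distinct token sequences existed for $w$, then at the first position where they diverge some suffix $ax$ of $w$ would be matched by two distinct lexical non-terminals $A,B\in\Vlex$, which the unique token prefix property rules out directly. No determinism lemma, no structural induction, and no reference to $p_S$ or to proof trees is used; the lemma is read as a statement that the tokenization of $w$ is determined by $w$ alone. Your framing instead anchors the sequence in the committed lexical matches inside the unique derivation of $\Matgk{p_S}{w}{R}$, and you spend effort defining that sequence carefully across backtracking and predicates. That rigour is defensible given how informally the statement is phrased, but notice that determinism already fixes the whole proof tree, so the structural induction is largely bookkeeping; the only place the unique token prefix property actually enters is your base-case remark that ``the token at that position is determined by $x$ alone'', which is essentially the paper's entire argument. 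Also note that the downstream use in Lemma~\ref{lem:unimat} needs the token sequence to agree across \emph{hypothetical} alternative parse paths after backtracking, which is immediate under the paper's input-only reading but would require one extra sentence to extract from your proof-tree reading.
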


\begin{proof}
By contradiction. Assume the sequence is not unique.
This implies that for some suffix $ax$ of $w$ we would have that
$\Matgk{A}{ax}{R} \Lp \Xp$ and $\Matgk{B}{ax}{R} \Lp \Xpp$,
where $A, B \in \Vlex$, which is not possible given that $G$
has the unique token prefix property. 
\end{proof}

Below, we present Algorithm~\ref{alg:standard},
which automatically adds labels and recovery expressions
to a PEG $G =(\Vlex \cup \Vsyn,T,P,L,R,\Lfail, p_{S})$. 
We assume that all occurrences of $\fst$
and $\flw$ in Algorithm~\ref{alg:standard} give
their results regarding to the grammar $G$ passed
to function $\annotate$. We also assume grammar
$G^\prime$ from function $\annotate$ is available
in function $\addlab$.  

\begin{varalgorithm}{Standard}
    \caption{Adding Labels and Recovery Expressions to a PEG}
    \label{alg:standard}
{\small
\begin{algorithmic}[1]
\Function{\annotate}{$G$}
	\Let{$G^\prime$}{$G$}	
	\For{$A \in \Vsyn$}  
		\Let{$G^\prime(A)$}{$\labexp(G(A), \mathbf{false}, \flw(A))$}
	\EndFor
	\State \Return $G^\prime$
\EndFunction

\State

\Function{\labexp}{$p, seq, flw$}
  \If{$p = A \;\,\mathbf{and}\;\, \Epsi \notin \fst(A) \;\,\mathbf{and}\;\, seq$}
		\State \Return $\addlab(p, flw)$ 
	\ElsIf{$p = p_1\;p_2$}
        \Let{$p_x$}{$\labexp(p_1,seq,\calck(p_2,flw))$}
        \Let{$p_y$}{$\labexp(p_2,seq \;\,\mathbf{or}\;\,  \Epsi \notin \fst(p_1),flw)$}
        \State \Return $p_x \; p_y$
	\ElsIf{$p = p_1 \;/\; p_2$}
        \Let{$p_x$}{$p_1$}
        \If{$\fst(p1) \cap \calck(p2, flw) = \emptyset$}
			\Let{$p_x$}{$\labexp(p_1,\mathbf{false},flw)$}
        \EndIf
        \Let{$p_y$}{$\labexp(p_2,\mathbf{false},flw)$}
		\If{$seq \;\,\mathbf{and}\;\, \Epsi \notin \fst(p_1 \;/\; p_2)$}
			\State \Return $\addlab(p_x \;/\; p_y, flw)$ 
		\Else
			\State \Return $p_x \;/\; p_y$
		\EndIf
	\ElsIf{$p = p_1\!* \;\,\mathbf{and}\;\, \fst(p_1) \cap flw = \emptyset$}
		\State \Return $\labexp(p_1, \mathbf{false}, \fst(p_1) \cup flw)*$
	\Else
		\State \Return $p$
	\EndIf
\EndFunction

\State

\Function{\calck}{$p,flw$}
	\If{$\Epsi \in \fst(p)$}
	    \State \Return ($\fst(p) - \{ \Epsi \}) \cup flw$
	\Else  
		\State \Return $\fst(p)$
	\EndIf
\EndFunction

\State

\Function{\addlab}{$p, flw$}
  \Let{$l$}{newLabel()}
  \Let{$\Rp(l)$}{$(!flw \;eatToken)*$}
	\State \Return $[p]^l$
\EndFunction
\end{algorithmic}
}
\end{varalgorithm}

Function \texttt{\annotate} (lines 1--5) generates a new annotated grammar $G^\prime$
from a grammar $G$. It uses \texttt{\labexp} (lines 7--26) to annotate the
right-hand side, a parsing expression, of each syntactical rule of grammar G.
The auxiliary function \texttt{\calck} (lines 28--32) is used to update the
$\flw$ set associated with a parsing expression. By its turn, the auxiliary
function \texttt{\addlab} (lines 34--37) receives a parsing expression $p$ to annotate
and its associated $\flw$ set $flw$. Function \texttt{\addlab} associates a label $l$ to $p$ and also
builds a recovery expression for $l$ based on $flw$. 
The expression $eatToken$, which matches an input token, can be generated
from the lexical rules of $G$. We assume $G$ has the unique prefix property
when computing $eatToken$ automatically.

Algorithm~\ref{alg:standard} annotates every right-hand side, instead of going
top-down from the root, to not be overly conservative and fail to annotate non-terminals
reachable only from non-LL(1) choices but which themselves might be LL(1).
We will see in Section~\ref{sec:evalstandard} that this has the unfortunate result
of sometimes changing the language being parsed, which is the major
shortcoming of Algorithm~\ref{alg:standard}.

Function \texttt{\labexp} has three parameters. The first one, $p$, is
a parsing expression that we will try to annotate. The second parameter,
$seq$, is a boolean value that indicates whether the current concatenation
consumes at least one token before $p$ or not. 
Finally, the parameter $flw$ represents the $\flw$ set associated with $p$.
Let us now discuss how \texttt{\labexp} tries to annotate $p$.

When $p$ is a non-terminal expression and it
is part of a concatenation that already matched
at least one token (lines 8--9), then we associate a new
label with $p$. In case $p$ represents a non-terminal but $seq$
is not \textbf{true}, we will just return $p$ itself (lines 25--26).
In line 8, we also test whether $A$ matches
the empty string or not. This avoids polluting the grammar
with labels which will never be thrown, since a parsing expression
that matches the empty string does not fail.

In case of a concatenation $p_1\;p_2$ (lines 10--13), we try to annotate
$p_1$ and $p_2$ recursively. To annotate $p_1$ we use an updated $\flw$ set, 
and to annotate $p_2$ we set its parameter $seq$ to \textbf{true} whenever
$seq$ is already \textbf{true} or $p_1$ does not match the empty string.

In case of a choice $p_1\,/\,p_2$ (lines 14--22), we annotate
$p_2$ recursively and in case the choice is disjoint we also
annotate $p_1$ recursively. In both cases, we pass the value
\textbf{false} as the second parameter of \texttt{\labexp},
since failing to match the first symbol of an alternative should
not signal an error. When $seq$ is \textbf{true}, we associate
a label to the whole choice when it does not match the empty
string. 

In case $p$ is a repetition $p_1*$ (lines 23--24), we can annotate $p_1$
if we have a disjoint repetition, i.e., if there is no intersection
between $\fst(p_1)$ and $flw$. 
When annotating $p_1$ we pass \textbf{false} as the second parameter
of \texttt{\labexp} because failing to match the first symbol of a
repetition should not signal an error.

Our concrete implementation of Algorithm~\ref{alg:standard} also
adds labels in case of repetitions of the form $p_1+$, which
should match $p_1$ at least once, and $p_1?$, which should match
$p_1$ at most once. As these cases are similar to the case of $p_1*$,
we will not discuss them here.

Given the PEG from Figure~\ref{fig:javagrammar}, function \texttt{\annotate}
would give us the grammar presented in Figure~\ref{fig:javalabels}
(as previously, we are not taking rule {\it prog} into consideration),
with the exception of the annotation $\lab{stmt}{elsestmt}$.
Label {\tt elsestmt} was not inserted at this point because token
{\tt ELSE} may follow the choice ${{\tt ELSE}\;stmt\;\,/\;\,\Epsi}$,
so this choice is not disjoint (the well-known {\it dangling else} problem).
In Figure~\ref{fig:javalabels}, we associated the label
{\tt elsestmt} to {\it stmt}. This indicates that an {\tt else}
must be associated with the nearby {\it if} statement.

It is trivial to change the algorithm to leave any existing labels
and recovery expressions in place, or to add recovery expressions to
any labels that are already present but do not have recovery expressions.

After applying Algorithm~\ref{alg:standard} to automatically
insert labels, a grammar writer can later add (or remove) labels and
their associated recovery expressions.
We discuss more about this on the next section, where we evaluate
the use of Algorithm~\ref{alg:standard} to add error recovery for
the parsers of several programming languages.

\section{Evaluating Algorithm~\ref{alg:standard}}
\label{sec:evalstandard}

To evaluate Algorithm~\ref{alg:standard}, we built
PEG parsers for the programming languages Titan, C, Pascal
and Java. To build such parsers we used
\texttt{LPegLabel}\footnote{\url{https://github.com/sqmedeiros/lpeglabel}},
a tool that implements the semantics of PEGs with labeled failures, and
\texttt{pegparser}\footnote{\url{https://github.com/sqmedeiros/pegparser}},
which automatically adds labels and recovery expressions to a PEG. 
When building the parsers, we focused on the syntactical
rules, so we have omitted or simplified some lexical rules.

For each language, we first wrote an unlabeled version of the grammar
based on some reference grammar. We have tried to follow the reference
grammar syntactic structure to avoid a bias that could favor our algorithm.
We used a set of syntactically valid and invalid programs to validate
each parser.

Given an unlabeled grammar, we used \texttt{pegparser} to got an
automatically annotated grammar following Algorithm~\ref{alg:standard},
with a recovery expression associated to each label.
We will use the term {\it generated} when referring to
this annotated grammar.

We will compare the generated grammar with a manually annotated
grammar obtained from the unlabeled grammar. We used the same
set of syntactically valid and invalid programs to validate
the generated grammar and the manually annotated one.

In our comparison, we will check the labels of the
generated grammar against the labels of the manually annotated
grammar. We will discuss mainly the following items:
\begin{enumerate}
    \labitem{Equal}{it:equal} When the algorithm correctly inserted a label, as the manual annotation did.
    \labitem{Extra}{it:new} When the algorithm correctly inserted a new label. 
    \labitem{Wrong}{it:wrong} When the algorithm incorrectly inserted a label.
\end{enumerate}

Table~\ref{tab:evallab} shows the result of comparing
the automatically inserted labels with the manually ones.
Below, in Sections~\ref{sec:titanstandard},~\ref{sec:cstandard},~\ref{sec:pascalstandard}
and~\ref{sec:javastandard} we discuss the automatic insertion
of labels for each language.

\begin{table}[t]
{
\small
\begin{subtable}[t]{0.48\textwidth}
\begin{tabular}{|c|c|c|c|}
      \hline
      Approach  & \ref{it:equal} &  \ref{it:new} & \ref{it:wrong} \\ \hline
      Manual    &  86  &  0  &  0  \\ \hline
      Standard  &  76  &  2  &  2  \\ \hline
    \end{tabular}
    \caption{Labels Inserted for Titan}
		\label{tab:evaltitan}
\end{subtable}
\hspace{\fill}
\begin{subtable}[t]{0.48\textwidth}
\flushright
\begin{tabular}{|c|c|c|c|}
      \hline
      Approach  & \ref{it:equal} &  \ref{it:new} & \ref{it:wrong} \\ \hline
      Manual    &  87  &  0  &  0  \\ \hline
      Standard  &  65  &  9  &  1  \\ \hline
    \end{tabular}
    \caption{Labels Inserted for C}
		\label{tab:evalc}
\end{subtable}

\bigskip

\begin{subtable}[t]{0.48\textwidth}
\begin{tabular}{|c|c|c|c|}
      \hline
      Approach  & \ref{it:equal} &  \ref{it:new} & \ref{it:wrong} \\ \hline
      Manual    &  102  &  0  &  0  \\ \hline
      Standard  &  100  &  1  &  3  \\ \hline
    \end{tabular}
    \caption{Labels Inserted for Pascal}
		\label{tab:evalpascal}
\end{subtable}
\hspace{\fill}
\begin{subtable}[t]{0.48\textwidth}
\flushright
\begin{tabular}{|c|c|c|c|}
      \hline
      Approach  & \ref{it:equal} &  \ref{it:new} & \ref{it:wrong} \\ \hline
      Manual    &  175 &  0  &  0  \\ \hline
      Standard  &  139 & 10  & 32  \\ \hline
    \end{tabular}
    \caption{Labels Inserted for Java}
		\label{tab:evaljava}
\end{subtable}
}
\caption{Evaluation of the Labels Inserted by Algorithm~\ref{alg:standard}}
\label{tab:evallab}
\end{table}

Ideally, we would want a generated grammar with the same labels
as the manually annotated one, hopefully with a few new correct
labels missed during manual annotation. To a certain extent, we do
not consider missing to add some labels a serious flaw of Algorithm~\ref{alg:standard},
as long as most of the labels are correctly inserted, since failing to add labels
does not lead to an incorrect parser. These (hopefully few) labels can
still be manually inserted later by an expert.

A discrepancy related to Item~\ref{it:wrong} is more problematic,
since it can produce a parser that does not recognize some
syntactically valid programs. This limitation of our algorithm means that
the output needs to be checked by the parser developer to ensure that the
algorithm did not insert labels incorrectly.

This checking can be done either by manual inspection of the grammar or
by running the generated parser against test programs. In this latter case, when
the parser fails to recognize a valid program, the parsing result will point
the label incorrectly added. Once identified, we need to remove the incorrect
label from the grammar. 

After analyzing how Algorithm~\ref{alg:standard} annotated
the grammar of a given language, we will discuss the error
recovering parser generated by it. 
During this discussion we will assume that we
have already removed the labels that Algorithm~\ref{alg:standard}
may have inserted incorrectly.

As we mentioned, Algorithm~\ref{alg:standard} associates
a recovery expression to each label.
To recover from a label $l$ we add a recovery rule $l$ to
the grammar, where the right-hand side of
$l$ is its recovery expression. 
The generated grammar has a recovery rule associated
with each label.

As \texttt{pegparser} automatically builds an AST
when the match is successful, we will evaluate
the error recovering parser got from a
generated grammar by comparing the AST built by the parser
for a syntactically invalid program with the AST of what
would be an equivalent correct program.
For the AST leaves associated with a syntax error,
we do not require their contents to be the same,
just the general type of the node, so we are comparing
just the structure of the ASTs.

Based on this strategy, a recovery is {\it excellent}
when it gives us an AST equal to the intended one.
A {\it good} recovery gives us a reasonable AST, i.e.,
one that captures most information of the original program
(e.g., it does not miss a whole block of commands).
A {\it poor} recovery, by its turn, produces an AST that
loses too much program information. 
Finally, a recovery is rated as {\it awful}
whenever it gives us an AST without any information
about the program.

Table~\ref{tab:evalrec} shows for how many programs 
of each language the recovery strategy we implemented was
considered {\it excellent}, {\it good}, {\it poor}, or
{\it awful}. 
Sections~\ref{sec:titanstandard},~\ref{sec:cstandard},~\ref{sec:pascalstandard}
and~\ref{sec:javastandard} discuss the results of error
recovery for each language.
In case of the manually annotated grammars, to evaluate them
we added recovery rules based on the way Algorithm~\ref{alg:standard}
generates recovery rules for labels.

\begin{table}[t]
{
\scriptsize
\begin{subtable}[t]{0.48\textwidth}
     \begin{tabular}[t]{|c|c|c|c|c|c|}
      \hline
      Approach  &  Excel.  & Good  & Poor  & Awful   \\ \hline
      Manual    &  91\%    &  4\%   &  5\%  &  0\%   \\ \hline
      Standard  &  81\%    &  3\%   & 15\%  &  1\%   \\ \hline
    \end{tabular}
    \caption{Error Recovery for Titan}
		\label{tab:evalrectitan}
\end{subtable}
\hspace{\fill}
\begin{subtable}[t]{0.48\textwidth}
    \begin{tabular}[t]{|c|c|c|c|c|c|}
      \hline
      Approach &  Excel.  & Good  & Poor   & Awful   \\ \hline
      Manual   &  87\%    &  7\%  &   5\%  &   2\%   \\ \hline
      Standard &  67\%    &  5\%  &  27\%  &   2\%    \\ \hline
    \end{tabular}
    \caption{Error Recovery for C}
		\label{tab:evalrecc}
\end{subtable}

\bigskip

\begin{subtable}[t]{0.48\textwidth}
     \begin{tabular}[t]{|c|c|c|c|c|c|}
      \hline
      Approach  &  Excel.  & Good  & Poor  & Awful   \\ \hline
      Manual    &  80\%    & 11\%   &  9\%  &  0\%   \\ \hline
      Standard  &  80\%    & 11\%   &  9\%  &  0\%   \\ \hline
    \end{tabular}
    \caption{Error Recovery for Pascal}
		\label{tab:evalrecpascal}
\end{subtable}
\hspace{\fill}
\begin{subtable}[t]{0.48\textwidth}
    \begin{tabular}[t]{|c|c|c|c|c|c|}
      \hline
      Approach &  Excel.  & Good   & Poor   & Awful  \\ \hline
      Manual   &  74\%    &  17\%  &   9\%  &   0\%   \\ \hline
      Standard &  63\%    &  16\%  &  21\%  &   0\%    \\ \hline
    \end{tabular}
    \caption{Error Recovery for Java}
		\label{tab:evalrecjava}
\end{subtable}

\caption{Evaluation of Automatic Error Recovery Based on Algorithm~\ref{alg:standard}}
\label{tab:evalrec}
}
\end{table}

To illustrate how we rated a recovery, let us consider
the following syntactically invalid Titan program, where the
range start of the \texttt{for} loop was not given at line 2:
{\small
\begin{verbatim}
    1  sum = 0
    2  for i = , 10 do
    3    print(i)
    4    sum = sum + i
    5  end
\end{verbatim}
}

A recovery would be {\it excellent} in case the AST has all the
information associated with this program (such AST should have
a dummy node to represent the range start). A recovery would
be {\it good} in case the resulting AST misses only the information
about the loop range. By its turn, a recovery would by
rated as {\it poor} in case the resulting AST misses the statements
inside the \texttt{for} (lines 3 and 4).
Lastly, we would rate a recovery as \texttt{awful}
in case it would have produced an AST only with dummy nodes.

Below, based on the approach discussed previously, we evaluate
the use of Algorithm~\ref{alg:standard} to generate error recovering
parsers for the programming languages Titan, C, Pascal and Java.

\subsection{Titan}
\label{sec:titanstandard}

Titan~\cite{titan} is a new statically-typed programming language
under development to be used as a sister language to the Lua programming
language~\cite{lua}.

After some initial development, the Titan parser was manually
annotated with labels to improve its error reporting. 
The original Titan parser~\footnote{\url{http://bit.ly/titan-reference}}
has no error recovery, it stops parsing the input after encountering
the first syntax error. Based on it, we wrote our unlabeled grammar for
Titan~\footnote{\url{http://bit.ly/titan-unlabeled}},
which has 50 syntactical rules.

The Titan grammar is not $LL(1)$, there are non-$LL(1)$ choices
in 7 rules and non-$LL(1)$ repetitions in 3 rules,
but it has many $LL(1)$ parts.

The manually annotated Titan grammar~\footnote{\url{http://bit.ly/titan-manual}}
we got from our unlabeled grammar is equivalent to the original Titan grammar,
we have just adapted the grammar syntax to be able to use the \texttt{pegparser} tool.

The manually annotated grammar has 86 expressions that throw labels.
Some labels, such as {\tt EndFunc}, are thrown more than once, i.e.,
they are associated with more than one expression. 

We then applied Algorithm~\ref{alg:standard} to this unlabeled grammar
and got an automatically annotated Titan grammar,
with a recovery expression associated to each
label~\footnote{\url{http://bit.ly/titan-standard}}. 

In Section~\ref{sec:autolabtitan}, we compare the labels automatically inserted
with the labels in the original Titan grammar. Then, in Section~\ref{sec:autorectitan},
we will discuss the error recovery mechanism of the generated Titan grammar.

\subsubsection{Automatic Insertion of Labels}
\label{sec:autolabtitan}

Algorithm~\ref{alg:standard} annotated the Titan
grammar with 80 labels, which is close to the 86 labels of the
original Titan grammar. A manual inspection revealed that usually the
algorithm inserted labels at the same location of the original ones,
as Table~\ref{tab:evaltitan} shows. We could insert automatically
around 90\% of the labels inserted manually.
Below we discuss the main issues related to the generated
Titan grammar.

As expected our approach did not annotate parts of the 
grammar where the alternatives of a choice were not disjoint,
on in case of a non-disjoint repetition. This happened
in 4 of the 50 grammar rules. One of these rules was \textit{castexp},
which we show below:
\begin{align*}
castexp & \leftarrow simpleexp\; {\tt AS}\; type \;/\; simpleexp
\end{align*}

As we can see, both alternatives of the choice match a {\it simpleexp},
so these alternatives are not disjoint. After manual inspection, we can
see it is possible to add a label to {\it type} in the first alternative, 
since the context where {\it castexp} appears in the rest of the grammar
makes it clear that a failure on {\it type} is always a syntax error.
Left-factoring the right-hand side of {\it castexp} to $simpleexp\;({\tt AS}\;type\;/\;\Epsi)$,
or using the short form $simpleexp\;({\tt AS}\;type)?$,
would give enough context for Algorithm~\ref{alg:standard}
to correctly annotate {\it type} with a label, though.

The manually annotated Titan grammar uses an approach known
as {\it error productions}~\cite{grune2010ptp}.
As an example, the choice associated with rule
{\it statement} has two extra alternatives whose only purpose
it to match some usual syntactically invalid statements, in order
to provide a better error message.
One of these alternatives is as follows:
\begin{align*}
    \&(exp\; {\tt ASSIGN})\; \throw^{{\tt ExpAssign}}
\end{align*}

Before this alternative, the grammar has one that tries to
match an assignment statement. That alternative might have
failed because the programmer used an expression that is not
a valid l-value in the left-hand side of the assignment.
This error production guards against this case. Without the error production, 
the parser would still fail, but we would get an error related to
not closing a function, which may be confusing for a user.

The Algorithm~\ref{alg:standard} does not add error productions,
and we think they should only be added by an expert. 

In case of Titan, the algorithm inserted two labels incorrectly,
a problem related to Item~\ref{it:wrong}, which made the parser
reject valid inputs. Although these two labels have also been added
during the manual annotation, their insertion by Algorithm~\ref{alg:standard}
was undue, as we will see. This issue happened in rules {\it toplevelvar}
and {\it import}. Figure~\ref{fig:titanimport} shows the definition of these
rules, plus some rules that help to add context, in the manually
annotated Titan grammar.

\begin{figure*}
{\small
\begin{align*}
program & \leftarrow  ( toplevelfunc \;/\; toplevelvar \;/\; toplevelrecord \;/\; import \;/\; foreign )* \;\,{\tt EOF}\\
toplevelvar & \leftarrow localopt\; decl\; \labt{ASSIGN}{AssignVar}\; !({\tt IMPORT} \;/\; {\tt FOREIGN})\; \lab{exp}{ExpVarDec}  \\
import      & \leftarrow  {\tt LOCAL}\; \labt{NAME}{NameImport}\; \labt{ASSIGN}{AssignImport} \\
& \;\;\;\;\; \;
                          !{\tt FOREIGN}\; \labt{IMPORT}{ImportImport}\;
                          (\,{\tt LPAR}\; \labt{STRING}{StringLParImport}\; \labt{RPAR}{RParImport} \;/\;\,
                          \labt{STRING}{StringImport}) \\
foreign      &  \leftarrow  {\tt LOCAL}\; \labt{NAME}{NameImport}\; \labt{ASSIGN}{AssignImport} \\
& \;\;\;\;\; \;
                          {\tt FOREIGN}\; \labt{IMPORT}{ImportImport}\;
                          (\,{\tt LPAR}\; \labt{STRING}{StringLParImport}\; \labt{RPAR}{RParImport} \;/\;\,
                          \labt{STRING}{StringImport}) \\
decl & \leftarrow {\tt NAME}\; (\,{\tt COLON}\; \lab{type}{TypeDecl})? \\ 
localopt & \leftarrow {\tt LOCAL}?
\end{align*}
}
\caption{Predicates $!({\tt IMPORT} \;/\; {\tt FOREIGN})$ and $!{\tt FOREIGN}$ enable adding labels after them.}
\label{fig:titanimport}
\end{figure*}

Non-terminals {\it toplevelvar}, {\it import} and {\it foreign} are alternatives of a non-$LL(1)$
choice in rule {\it program}. The parser first tries to recognize {\it toplevelvar},
then {\it import}, and finally {\it foreign}. As a {\it decl} may consist of only a name,
an input like \inp{local x =} may be the beginning of any of these rules.
In rule {\it toplevelvar}, the predicate $!({\tt IMPORT} \;/\; {\tt FOREIGN})$
was added by the Titan developers to make sure the input neither matches the {\it import}
nor the {\it foreign} rule, so it is safe to throw an error after this
predicate in case we do not recognize an expression. The predicate $!{\tt FOREIGN}$
in rule {\it import} plays a similar role. 

As Titan developers inserted these predicates solely to enable
the subsequent label annotations, we judged that we would do a
fairer evaluation by removing them from our unlabeled grammar.

In rule {\it program}, although alternatives {\it toplevelvar}, {\it import},
and {\it foreign} have {\tt LOCAL} in their $\fst$ sets, the algorithm adds
labels to the right-hand side of these non-terminals, because it does not
take into consideration the fact these non-terminals appear as alternatives
in a non-$LL(1)$ choice.

The outcome is that the algorithm is able to insert the same labels added
by manual annotation, but without the syntactic predicates we should not throw
label {\tt AssignImport} in rule {\it toplevelvar} and label {\tt ImportImport}
in rule {\it import}. As Algorithm~\ref{alg:standard} inserted these labels,
the resulting parser will wrongfully signal errors in valid inputs such as
\inp{local x = import "foo"}.

After removing these labels, our generated Titan parser successfully
passed the Titan tests.

We think this was less work than manually annotating the grammar,
given that the parser already needs to have an extensive test suite
that will catch these errors, as was the case in our evaluation.

Lastly, Algorithm~\ref{alg:standard} correctly added two new
labels. It annotated {\tt RARROW} in the first alternative of rule {\it type},
and {\tt FOREIGN} in rule {\it foreign}.

\subsubsection{Automatic Error Recovery}
\label{sec:autorectitan}

The test suite of Titan has 74 tests related to syntactically
invalid programs. For our evaluation of automatic error recovery,
we ran the Titan parser against these files and we analyzed the
AST built for each of them. Since that our parser will only build
an AST for a successful matching, the grammar start rule should
not fail. Thus, as a special case, we should annotate the expressions
of the grammar start rule which may lead to a failure. In case of Titan,
we should annotate {\tt EOF} and add a recovery rule that consumes
the rest of the input. By doing this, we will get an AST whenever
we successfully match an input prefix before matching {\tt EOF}.
We will use this same approach for the other languages.
It is not difficult to extend the
Algorithm~\ref{alg:standard} with this extra case involving
the start rule.

We can see in Table~\ref{tab:evalrectitan}
that our recovery mechanism for Titan seems promising, since that
more than 80\% of the recovery done was considered acceptable,
i.e., it was rated at least {\it good}.

By analysing the programs for which our parser built a poor AST,
we can see that most cases (9 out of 11) are related to missing labels. 
Instead of throwing such labels and recovering
from them using their corresponding recovery expressions, the generated
parser will produce a regular failure, which either leads to the
failure of a matching or makes the parser backtrack.

As an example, let us see the case of a missing label
related to rule {\it castexp}, which we have shown in
Section~\ref{sec:autolabtitan}. In the following input
there is a missing type after the keyword \inp{as} at line 1:
{\small
\begin{verbatim}
    1  x = foo as
    2  return x
\end{verbatim}
}

The manually annotated parser would have thrown an error
after \inp{as}. However, as we have discussed in Section~\ref{sec:autolabtitan},
Algorithm~\ref{alg:standard} did not annotate this rule.
Thus, the automatically generated parser will produce a
regular failure after failing to match {\it type}
after \inp{as}.

This leads the first alternative of rule {\it castexp} to fail,
then the second alternative matches just the input \inp{foo}.
This will lead to another failure when the parser tries to
match \inp{as} as the beginning of a statement.

As Algorithm~\ref{alg:standard} was able to insert most
of the labels inserted by manual annotation, usually the
generated Titan parser was able to recover from an
syntactic error and to build an AST with nearly all the
information about a program.

\subsection{C}
\label{sec:cstandard}

We have developed a parser for C,
without preprocessor directives,
based on the reference grammar
presented by Kernighan and Ritchie~\citep{kernighan1989c},
which is essentially a grammar for ANSI C89.

To write our unlabeled grammar for 
C~\footnote{\url{http://bit.ly/c89-unlabeled}}
we needed to remove left-recursion,
as \texttt{LPegLabel} does not accept grammars with left-recursive
rules. After this, we got an unlabeled grammar for C with
50 syntactical rules, from which 17 have non-$LL(1)$ choices
and 5 have non-$LL(1)$ repetitions.

Due to the \texttt{typedef} feature, to correctly recognize the
C syntax we need the help of semantic actions to determine when
a name should be considered a \textit{typedef\_name}. As we did not
implement these semantic actions, we disabled the matching of
this rule to not incorrectly recognize an identifier as a
\textit{typedef\_name}.

The manually annotated C
grammar~\footnote{\url{http://bit.ly/c89-manual}}
has 87 expressions that throw
labels.
By its turn, the automatically annotated C
grammar~\footnote{\url{http://bit.ly/c89-standard}}
we got after applying Algorithm~\ref{alg:standard}
has 75 labels.

In Section~\ref{sec:autolabc}, we compare the manually annotated
C grammar with the automatically annotated one. After, in Section~\ref{sec:autorecc},
we will discuss the error recovering C parser we got from this automatically annotated
grammar.

\subsubsection{Automatic Insertion of Labels}
\label{sec:autolabc}

As was the case for Titan, often the
Algorithm~\ref{alg:standard} inserted labels at the same location
of the original ones, as Table~\ref{tab:evalc} shows.
The algorithm was able to insert 75\% of the labels inserted manually.

As our C grammar has many rules with non-$LL(1)$ choices
(17 out of 50), and some rules with $non-LL(1)$ repetitions too,
it was not possible to automatically add some labels in these rules.

Algorithm~\ref{alg:standard} incorrectly added one new label,
in rule {\it function\_def}. 
Figure~\ref{fig:cfunctiondef} shows
the definition of this rule, plus other rules that help to add context,
in the generated C grammar.

\begin{figure*}
{\small
\begin{align*}
translation\_unit & \leftarrow  external\_decl+\;\, !{\tt EOF}  \\
external\_decl    & \leftarrow  function\_def  \;\;/\;\;  decl \\
function\_def     & \leftarrow declarator\; decl*\;\, compound\_stat  \;\;/\;\;  decl\_spec\;\, \lab{function\_def}{ErrFuncDef} \\
decl\_spec        &  \leftarrow storage\_class\_spec  \;\;/\;\;  type\_spec \;\;/\;\;  type\_qualifier \\
decl              & \leftarrow  decl\_spec\;\, init\_declarator\_list?\;\, {\tt SEMI} \;\;/\;\;  decl\_spec\;\, \lab{decl}{ErrDecl} 
\end{align*}
}
\caption{Label {\it ErrFuncDef} Incorrectly Added in Rule {\it function\_def}} 
\label{fig:cfunctiondef}
\end{figure*}

The cause of the problem related to Item~\ref{it:wrong}
in the C grammar is similar to the one discussed in Titan
grammar in Section~\ref{sec:autolabtitan}.
In rule {\it external\_decl}, we have a non-$LL(1)$ choice,
since that a {\it decl\_spec} may be the beginning of a
{\it function\_def} as also of a {\it decl}.

When we annotate the right-hand side of the rule associated
with non-terminal {\it function\_def}, which appears in the first
alternative of the non-$LL(1)$ choice in rule {\it external\_decl},
we may throw a label incorrectly. In this case, given an input
like \inp{int x;}, we would match \inp{int} as a {\it decl\_spec}
and we would throw label {\tt ErrFuncDef} after failing to recognize
\inp{x;} as a {\it function\_def}.
After removing label {\tt ErrFuncDef}, our generated C parser
successfully passed the tests.

Finally, Algorithm~\ref{alg:standard} added 9 new labels correctly,
which is more than the 2 new labels added for the Titan grammar.
We think this may be due to the higher rate of non-disjoint expressions
in our C grammar, which may have imposed a more conservative behavior
during manual annotation.

Nevertheless, the manual annotation is not free of faults.
For both grammars some labels were added during manual annotation
and later removed when the parser failed to recognize syntactically
valid programs.

\subsubsection{Automatic Error Recovery}
\label{sec:autorecc}

The test suite we used for our C parser has 59
syntactically invalid programs. As we did for Titan,
we ran the generated C parser against
these files and we analyzed the AST built for each of them.
As we discussed in Section~\ref{sec:autorectitan},
we manually added labels to the grammar start rule
to assure our parser will build an AST when it
successfully matches an input prefix.
In the case of the C grammar, we added two labels
to the right-hand side of the grammar start rule.

In Table~\ref{tab:evalrecc} we can see that
for more than 70\% of the syntactically invalid programs
in our test set the recovery done was considered acceptable,
i.e., it was rated at least {\it good}.

Similarly to Titan (see~\ref{sec:autorectitan}),
in most cases (12 out of 16) we can associate the building of a poor AST
by our parser with the absence of a label. 

As our C grammar has more non-$LL(1)$ choices,
Algorithm~\ref{alg:standard} missed more labels,
which makes a proper recovery more difficult and
results in more poor ASTs.

As an example, let us see the case of a missing label
related to an \texttt{if-else} statement.
Figure~\ref{fig:cifelse} shows the definition of such
statement in rule {\it stat} of the manually
annotated C grammar. Other alternatives of rule {\it stat}
were omitted for simplicity.

\begin{figure*}
{\small
\begin{align*}
stat & \leftarrow    {\tt IF}\; \labt{LPAR}{BrackIf}\; \lab{exp}{InvalidExpr}\; \labt{RPAR}{Brack}\; \lab{stat}{Stat}\; {\tt else}\; \lab{stat}{Stat} \\
     &  \;\; / \;\;  {\tt IF}\; \labt{LPAR}{BrackIf}\; \lab{exp}{InvalidExpr}\; \labt{RPAR}{Brack}\; \lab{stat}{Stat} 
\end{align*}
}
\caption{Manually Annotated {\it if-else} Statement} 
\label{fig:cifelse}
\end{figure*}

As the choice in {\it stat} is not $LL(1)$, 
Algorithm~\ref{alg:standard} will not add the
5 labels to the first alternative of this choice. 
Given a program as the following one, where
there is no statement associated with the {\tt else}:
{\small
\begin{verbatim}
    1  int fat (int x) {
    2    if (x == 0)
    3      return 1;
    4    else
    5  }
\end{verbatim}
}

The generated C parser will try to recognize the first
alternative of the choice in rule {\it stat}. It will
fail to recognize {\it stat} after \inp{else}, which
will produce a regular failure. Thus, the parser backtracks,
recognize an {\it if}-statement without an {\it else}-part,
and then will fail to recognize another statement
as we left \inp{else} on the input.

As we commented out in Section~\ref{sec:autolabtitan},
we could rewrite this choice to put in evidence the common
prefix. After doing this, Algorithm~\ref{alg:standard}
could annotate the {\it if}-statement and we would
get a better recovery in this case.

Although Algorithm~\ref{alg:standard} will not annotate
{\tt LPAR} in the first alternative of the choice above,
this will not make error recovery worst in case of a
missing \inp{(} after \inp{if}, as long as we annotate
{\tt LPAR} in the second alternative. The reason for this
is that after failing to match {\tt LPAR} via the first
alternative, the parser will backtrack and eventually
match {\tt LPAR} via the second alternative.
The same rationale applies for the other labels present
in the common prefix of both alternatives.

\subsection{Pascal}
\label{sec:pascalstandard}

We have developed a parser for Pascal
based on the grammar available in
the ISO 7185:1990 standard~\cite{pascaliso1990}.
Our unlabeled Pascal
grammar~\footnote{\url{http://bit.ly/pascal-unlabeled}}
has 67 syntactical rules. Among these rules, 4 of them
have non-$LL(1)$ choices, and 6 of them have non-$LL(1)$
repetitions.

The manually annotated Pascal
grammar~\footnote{\url{http://bit.ly/pascal-manual}}
has 102 expressions that throw
labels. 

By using Algorithm~\ref{alg:standard},
from the unlabeled Pascal grammar we got
a generated
grammar~\footnote{\url{http://bit.ly/pascal-standard}}
with 104 labels. Below, Section~\ref{sec:autolabpascal} compares the
manually annotated grammar with the generated one, and Section~\ref{sec:autorecpascal} discusses
the error recovering Pascal parser we got from this
generated grammar.

\subsubsection{Automatic Insertion of Labels}
\label{sec:autolabpascal}

As Table~\ref{tab:evalpascal} shows, 
Algorithm~\ref{alg:standard} annotated the Pascal
grammar in a way nearly identical
to manual annotation, it inserted 98\% of the labels
inserted manually. We think the low number of non-$LL(1)$
choices and non-$LL(1)$ repetitions helped the algorithm
to achieve this performance. 

However, three of the labels inserted by
Algorithm~\ref{alg:standard} were added incorrectly.
The incorrect labels were added to
rules {\it subrangeType}, {\it assignStmt} and {\it funcCall}.
All these rules are referenced (directly or indirectly) in
the first alternative of non-$LL(1)$ choices, where an
identifier belong to the $\fst$ set of both choice alternatives.
Let us discuss the problem related to {\it assignStmt},
whose definition is given in Figure~\ref{fig:pascalassign}.

\begin{figure*}
{\small
\begin{align*}
simpleStmt  & \leftarrow  assignStmt  \;\,/\;\,  procStmt  \;\,/\;\,  gotoStmt \\
assignStmt  & \leftarrow  var\; \labt{ASSIGN}{AssignErr}\; \lab{expr}{ExprErr} \\
var         & \leftarrow  {\tt ID}\; ({\tt LBRA}\; \lab{expr}{ExprErr}\; ({\tt COMMA}\; \lab{expr}{ExprErr})*\; \labt{RBRA}{RBrackErr}  \;\,/\;\,  {\tt DOT}\; \labt{ID}{IdErr}  \;\,/\;\,  {\tt POINTER})* \\
procStmt    & \leftarrow  {\tt ID}\; params?
\end{align*}
}
\caption{Label {\it AssignErr} Incorrectly Added in Rule {\it assignStmt}} 
\label{fig:pascalassign}
\end{figure*}

We can see in this figure that there is a non-$LL(1)$ choice
in rule {\it simpleStmt}, as {\tt ID} belongs to the $\fst$ set of
both {\it assignStmt} and {\it procStmt}. Due to this,
in rule {\it assignStmt}, which appears in the first alternative
of this choice, we should not annotate {\tt ASSIGN}, otherwise the
parser will not recognize a valid {\it procStmt} such as \inp{f(x)},
as \inp{:=} does not follow the identifier \inp{f}.

After removing the incorrect labels in rules
{\it subrangeType}, {\it assignStmt} and {\it funcCall},
our generated Pascal parser successfully passed the tests.

Lastly, Algorithm~\ref{alg:standard} also added 2 new labels correctly.

\subsubsection{Automatic Error Recovery}
\label{sec:autorecpascal}

Our test suite for Pascal has 101
syntactically invalid programs. 
We can see in Table~\ref{tab:evalrecpascal} that
for more than 90\% of the syntactically invalid programs
in our test set the recovery done was considered acceptable,
i.e., it was rated at least {\it good}.

Differently from the analysis we did for the Titan
and the C error recovering parsers, in case of the Pascal
parser we can not associate the poor ASTs with the
absence of labels.  A manual
inspection indicates that most of poor ASTs built
were due to synchronizing the input too early
(instead of discarding one more token). This
issue may be fixed by adjusting the recovery expression
used. Our approach allows to do this tuning manually
for a given recovery expression.

Overall, a recovery strategy may show a better performance
after it is tuned to match features of a given language.

\subsection{Java}
\label{sec:javastandard}

We have developed a parser for Java 8
following the parser available at the
Mouse site~\footnote{\url{http://www.romanredz.se/Mouse/Java.1.8.peg}}.

Our unlabeled Java
grammar~\footnote{\url{http://bit.ly/java8-unlabeled}} has
147 syntactical rules, where there are
35 rules with a non-$LL(1)$ choice and
15 rules with a non-$LL(1)$
repetition. A rule may have a non-$LL(1)$ choice
and also a non-$LL(1)$ repetition, but this occurs
in only 2 rules. Overall, one third of the grammar rules
has an $LL(1)$ conflict.
The manually annotated Java
grammar~\footnote{\url{http://bit.ly/java8-manual}}
has 175 expressions that throw
labels. 

From the unlabeled Java grammar, we used
Algorithm~\ref{alg:standard} to get
a generated
grammar~\footnote{\url{http://bit.ly/java8-standard}}
with 181 labels.

In Section~\ref{sec:autolabjava} we compare the
manually annotated grammar with the generated one,
and in Section~\ref{sec:autorecjava} we discuss
our error recovering parser for Java.

\subsubsection{Automatic Insertion of Labels}
\label{sec:autolabjava}

We can see in Table~\ref{tab:evaljava} that
Algorithm~\ref{alg:standard} annotated the Java
grammar with 181 labels, from which 139
were also inserted during the manual annotation.
This seems a good amount, given that many rules
of the grammar have an $LL(1)$ conflict.

The $LL(1)$ conflicts also impose a difficult
to add labels correctly. As a consequence of this,
an important part of the labels added (18\%) by
Algorithm~\ref{alg:standard} were inserted incorrectly.
The cases where these labels were inserted are similar
to the cases of incorrect labels we have already discussed
for the other languages, so we will not present them here.

The significant number of incorrect labels added limits
somewhat the usefulness of using Algorithm~\ref{alg:standard}
to annotate our unlabeled Java grammar, since that it is necessary
to manually remove several labels later. Although
this removal is not hard, the usual process requires
running the tests once for each incorrect label,
and then removing such label after failing to pass
the tests.

Finally, Algorithm~\ref{alg:standard} also correctly added 10 new labels.

\subsubsection{Automatic Error Recovery}
\label{sec:autorecjava}

Our test suite for Java has 175
syntactically invalid programs. 
Table~\ref{tab:evalrecjava} shows that
for almost 80\% of these programs
the recovery done was considered acceptable,
i.e., it was rated at least {\it good}.

About half of the cases where our generated
parser built a poor AST are related to a missing
label. We could get a better
result in these cases by rewriting non-disjoint
choices, as we have shown for Titan
and C, so Algorithm~\ref{alg:standard}
could insert more labels and their corresponding
recovery rules.

For also about half of the cases we got a poor AST
because of an intersection between the tokens that could
follow a symbol in the right-hand side of a rule $A$
and the tokens that could follow $A$ itself. To
improve these ASTs we usually need either
to manually add labels to the grammar or
to manually tune the recovery rules.

\section{Conservative Insertion of Labels}
\label{sec:algocon}

As have discussed previously, Algorithm~\ref{alg:standard}
annotates a grammar with labels, but it may add labels incorrectly,
which leads to a parser that rejects some valid inputs.
To avoid this shortcoming, we will discuss conservative
approaches, which address the problem related to Item~\ref{it:wrong}.

\subsection{Non-Terminals Banning}
\label{sec:ban}

Our first approach to not insert labels incorrectly
is based on the idea of banning a non-terminal $A$
that is used in a non-disjoint choice or a non-disjoint repetition.
When $A$ is banned, we do not
annotate its right-hand side. To properly avoid the wrong
insertion of labels, this approach should be recursive, i.e.,
when banning $A$ we should also ban the non-terminals in
the right-hand side of $A$. 

\begin{figure*}
{\small
\begin{align*}
ordinalType     & \leftarrow  newOrdinalType  \;\;/\;\;  {\tt ID} \\
newOrdinalType  & \leftarrow  enumType  \;\;/\;\;  subrangeType \\
enumType 		    & \leftarrow  {\tt LPAR} \;\, \lab{ids}{IdErr} \;\, \labt{RPAR}{RParErr} \\
subrangeType    & \leftarrow  const\;\, \labt{DOTDOT}{DotDotErr}\;\, \lab{const}{ConstErr} \\
const           & \leftarrow  {\tt SIGN}?\;\, ({\tt UNUMBER}  \;\;/\;\;  {\tt ID})  \;\;/\;\;  {\tt STRING}
\end{align*}
}
\caption{Label {\it DotDotErr} Incorrectly Added in Rule {\it subrangeType}} 
\label{fig:pascalsubrange}
\end{figure*}

To illustrate this point, let us consider Figure~\ref{fig:pascalsubrange},
which shows an excerpt from Pascal grammar. In rule {\it ordinalType} there is
a non-disjoint choice, where {\tt ID} belongs to the $\fst$ set of both
alternatives of the choice. Because of this, we should ban the non-terminal
{\it newOrdinalType}, so we will not annotate its right-hand side.

In case the banning process is not recursive, as in rule {\it newOrdinalType}
there is no conflict, we will not ban the non-terminals in its right-hand side.
This approach leads to incorrectly adding label 
{\tt DotDotErr} in rule {\it subrangeType}.

We should not throw {\tt DotDotErr} because in rule {\it ordinalType},
when matching the first alternative of $newOrdinalType  \;/\;  {\tt ID}$,
the parser could recognize an {\tt ID} as the beginning of a {\it subrangeType},
then fail to recognize {\tt DOTDOT}, backtrack and finally match the second
alternative. Thus, we should apply a recursive banning approach to avoid
adding labels incorrectly.

The result of applying such approach leads to the insertion
of a few labels, or even none. When there are conflicts in the
top-level grammar rules, the recursive banning strategy
bans almost all non-terminals. For the C and Java grammars,
after banning the non-terminals related to a non-disjoint
choice or repetition, we could not add a single label.
In case of Titan, we could add 12 labels, while for Pascal,
which has few non-disjointness conflicts, we had the
best result and could add 36 labels, which corresponds
to 35\% of the labels we have inserted manually.

Although the recursive banning approach have added only
correct labels, its usefulness seems quite limited. Therefore
we will use this strategy only as a complementary one. Below,
we discuss a more effective approach, based on the idea of unique
non-terminals, to conservatively insert only correct labels.

\subsection{Unique Non-Terminals}
\label{sec:unique}

In Section~\ref{sec:algostandard} we saw that
the main challenge when adding labels is to determine statically when
failing to match an expression $p$ indicates that the parser has no other
viable option to recognize the input.

In order to identify these safe places where we can insert
labels, we will introduce the concept of \emph{unique lexical
non-terminals}. The following definition says that a lexical
non-terminal $A$ is \emph{unique} when it appears in the right-hand
side of only one syntactical rule, and just once:
 
\begin{definition}[Unique lexical non-terminal]
Given a PEG $G \;=\; (\Vlex \cup \Vsyn,T,P,L,R,\Lfail, p_{S})$,
$A \in \Vlex$ is unique iff $\exists B \in \Vsyn$ such
that $A$ is used only once in $P(B)$ and $\forall C \in \Vsyn$,
where $C \neq B$, we have $A$ is not used in $P(C)$.
\end{definition}

When we have a grammar $G$ with the unique token prefix property,
and $A$ is a unique lexical non-terminal of $G$, once $A$ matched,
failing to match the expression the follows $A$ leads to the failure
of the whole matching, as the following lemma states:

\begin{lemma}[Unique matching]
\label{lem:unimat}
Let $G \;=\; (\Vlex \cup \Vsyn,T,P,L,R,\Lfail, p_{S})$ be
an unlabeled PEG, with the unique token prefix property,
and let $w$ be a subject $w$.
Let $A \;p_2$ be a subexpression of $P(B)$, where $A$ is a unique lexical
non-terminal and $B \in \Vsyn$, and let $axy$ be a suffix of $w$,
if $\Matgk{A}{axy}{R} \Lp {y}$ and $\Matgk{p_2}{y}{R} \Lp \Tup{\Lfail}{\Yp}$,
then $\Matgk{p_S}{w}{R} \Lp \Tup{\Lfail}{\Wp}$.
\end{lemma}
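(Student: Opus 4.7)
My plan is to prove this by contradiction. Suppose that $G[p_S] \; R \; w \leadsto X$ with $X$ not of the form $(\Lfail, \Wp)$; since $G$ is unlabeled, by Lemma~\ref{lem:unlabmatch} this means $X$ is a successful suffix of $w$. I would then exploit the uniqueness of $A$ and the unique token prefix property to force the parse to attempt $A\;p_2$ at the position where $axy$ begins, and then show that this attempt fails, contradicting the assumed success.

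The central lemma I would rely on is Lemma~\ref{lem:uniseq}: in any successful parse of $w$, the sequence in which the lexical non-terminals in $\Vlex$ match $w$ is uniquely determined. I would first argue that, because $A \in \Vlex$ appears exactly once in exactly one right-hand side (namely, in the subexpression $A\;p_2$ of $P(B)$), any derivation that invokes $A$ on a suffix must do so via that specific position inside $P(B)$. Combining this with the hypothesis $G[A]\;R\;axy \Lp y$ and with the unique token prefix property, I would conclude that during the (hypothetical) successful parse of $w$ from $p_S$, a call to $B$ must reach the subexpression $A\;p_2$ with remaining input $axy$. By rule {\bf seq.2}, since $G[p_2]\;R\;y \Lp \Tup{\Lfail}{\Yp}$, this forces $G[A\;p_2]\;R\;axy \Lp \Tup{\Lfail}{\Yp}$, so that occurrence of $A\;p_2$ fails.

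The main obstacle is ruling out the possibility that the $\Lfail$ produced by $A\;p_2$ is caught by an enclosing ordered choice or repetition (either inside $P(B)$, or around an invocation of $B$ itself, or further up the call chain) and that backtracking then leads to a successful completion. My strategy here is an auxiliary structural lemma: in an unlabeled PEG with the unique token prefix property, once a prefix of tokens has been consumed along a successful branch, no alternative branch taken after backtracking can consume a different prefix of tokens at the same input position and still lead to success, because that would contradict Lemma~\ref{lem:uniseq}. Since $A$ is the only way to consume the token at the start of $axy$ in this grammar (by uniqueness of $A$ together with the unique prefix property), any backtracking path that avoids invoking $A$ at this position cannot account for matching $ax$, and hence cannot yield a successful overall parse.

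Putting these pieces together, the assumed success of $G[p_S]\;R\;w$ forces the parse to reach $A\;p_2$ on $axy$, that call must fail by {\bf seq.2}, and no surrounding choice or repetition can recover because any alternative would have to match a different token prefix at the same position, which the unique token sequence lemma forbids. The failure therefore propagates all the way to $p_S$, giving $G[p_S]\;R\;w \Lp \Tup{\Lfail}{\Wp}$ and contradicting the assumption. I expect the bulk of the technical effort to be in the auxiliary lemma about backtracking, which will likely require an induction on the height of the derivation together with case analysis on the PEG rules for choice, repetition, and sequence.
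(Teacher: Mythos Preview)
Your proposal is essentially the same approach as the paper's: both rely on Lemma~\ref{lem:uniseq}, the unique token prefix property, and the uniqueness of $A$ to argue that any backtracking after the failure of $p_2$ must eventually re-encounter $A\;p_2$ on $axy$ and fail again. The paper's proof is considerably more informal than what you outline---it simply splits into ``backtrack'' versus ``no backtrack'' cases and dispatches each in a sentence---so your plan to isolate an auxiliary structural lemma and prove it by induction on derivation height is more work than the paper actually carries out, but it is the natural way to make the paper's sketch rigorous.
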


\begin{proof}
The proof uses Lemma~\ref{lem:uniseq} and the fact that
$G$ has the unique token prefix property and $A$ is a
unique lexical non-terminal.

When the matching of $p_2$ fails, either we backtrack
to a previous choice and try to match a different alternative,
or we do not backtrack.

In the former case, by Lemma~\ref{lem:uniseq} we know that after backtracking
the grammar will match the same sequence of tokens, thus we will need to
match $axy$ again. As $G$ has the unique token prefix property,
only $A$ matches $axy$, and given that $A$ is a unique lexical
non-terminal, $A$ is not used anywhere else in $G$.
Therefore, once more $A$ would match prefix $ax$ and $p_2$ would
fail to match $y$, leading to the failure of the whole matching.

The proof of the last case, when there is no backtracking, is straightforward
given the previous discussion.
\end{proof}

As a result of Lemma~\ref{lem:unimat}, we know that after matching
a unique lexical non-terminal $A$ we start a kind of \emph{unique path},
and failing to match an expression that follows $A$ indicates that the
input is invalid. Therefore, we can safely annotate the expression
$p_2$ that follows $A$.

Based on this, we present the
Algorithm~\ref{alg:unique},
which automatically annotates a PEG
$G =(V,T,P,L,R,{\tt fail}, p_{S})$.
In comparison with the Algorithm~\ref{alg:standard},
function \texttt{\labexp} now receives an extra parameter, $afterU$,
which indicates if we have already matched a unique lexical
non-terminal, and function \texttt{\muni},
which determines whether a parsing expression $p$
matches at least one unique lexical non-terminal or not,
is new. Below we discuss these functions in more detail.
Functions \texttt{\annotate}, \texttt{\calck} and \texttt{\addlab}
remain the same and their definitions were
omitted~\footnote{Actually, now function \texttt{\annotate} provides
a false value to $afterU$ when calling \texttt{\labexp}.}.
We assume the unique lexical non-terminals have already
been computed. Given a non-terminal $A$, function
\texttt{\isunique} returns true in case $A$ is
a unique lexical non-terminal, and false otherwise.

\begin{varalgorithm}{Unique}
    \caption{Inserting Labels and Recovery Expressions in a PEG after Unique Lexical Non-Terminals}
    \label{alg:unique}
{\small
\begin{algorithmic}[1]
%

\Function{\labexp}{$p, seq, afterU, flw$}
  \If{$p = A \;\,\mathbf{and}\;\, \Epsi \notin \fst(A) \;\,\mathbf{and}\;\, seq\;\,\mathbf{and}\;\,afterU$}
		\State \Return $\addlab(p, flw)$ 
	\ElsIf{$p = p_1\;p_2$}
        \Let{$p_x$}{$\labexp(p_1,seq,afterU,\calck(p_2,flw))$}
        \Let{$p_y$}{$\labexp(p_2,seq \;\,\mathbf{or}\;\, \Epsi \notin \fst(p_1),afterU\;\,\mathbf{or}\;\,\muni(p1),flw)$}
        \State \Return $p_x \; p_y$
	\ElsIf{$p = p_1 \;/\; p_2$}
        \Let{$disjoint$}{$\fst(p1) \cap \calck(p2, flw) = \emptyset$}
		\Let{$p_x$}{$\labexp(p_1,\mathbf{false},disjoint\;\,\mathbf{and}\;\,afterU,flw)$}
        \Let{$p_y$}{$\labexp(p_2,\mathbf{false},afterU,flw)$}
		\If{$seq \;\,\mathbf{and}\;\, \Epsi \notin \fst(p_1 \,/\, p_2)\;\,\mathbf{and}\;\, afterU$}
			\State \Return $\addlab(p_x \;/\; p_y, flw)$ 
		\Else
			\State \Return $p_x \;/\; p_y$
		\EndIf
	\ElsIf{$p = p_1\!*$}
	    \Let{$disjoint$}{$\fst(p1) \cap \calck(p2, flw) = \emptyset$}
		\State \Return $\labexp(p_1, \mathbf{false}, disjoint\;\,\mathbf{and}\;\,afterU, \fst(p_1) \cup flw)*$
	\Else
		\State \Return $p$
	\EndIf
\EndFunction

\State

\Function{\muni}{$p$}
	\If{$p = A \;\,\mathbf{and}\;\, \lexrule(A)$}
		\State \Return $\isunique(p)$
	\ElsIf{$p = p_1\;p_2$}
        \State \Return $\muni(p_1) \;\,\mathbf{or} \;\,\muni(p_2)$
	\ElsIf{$p = p_1 \;/\; p_2$}
        \State \Return $\muni(p_1) \;\,\mathbf{and} \;\,\muni(p_2)$
	\ElsIf{$p = p_1\!+$}
		\State \Return $\muni(p_1)$
	\Else
		\State \Return {\bf false}
	\EndIf
\EndFunction

\end{algorithmic}
}
\end{varalgorithm}

Function \texttt{\labexp} (lines 1--20) has four parameters:
$p$, a parsing expression; $seq$, a boolean value indicating 
whether the current concatenation have already matched at least one token;
$afterU$, a boolean value indicating whether the current right-hand side 
have already matched at least one unique lexical non-terminal;
$flw$, the $\flw$ set associated with $p$.

When $p$ is a non-terminal that does not match the empty string
and both $seq$ and $afterU$ are \textbf{true} (lines 2--3), then we associate
a new label with $p$. When $p$ is a non-terminal but these conditions
do not hold, we will just return $p$ itself (lines 19--20).

In case of a concatenation $p_1\;p_2$ (lines 4--7), the main difference
to Algorithm~\ref{alg:standard} is the handling of parameter
$afterU$ when annotating $p_2$ (line 6). In this case, we
supply a \textbf{true} value for $afterU$ when it is already \textbf{true}
or when $p_1$ consumes at least one unique lexical non-terminal.

When $p$ is a choice $p_1\,/\,p_2$ (lines 8--15), a main
difference to Algorithm~\ref{alg:standard} is that we call 
\texttt{\labexp} recursively even when the choice is not disjoint.
In this case, we set $afterU$ to \textbf{false} when annotating
$p_1$ (line 10). The rationale is that is not safe to throw a label after
failing to match $p_1$ in such case, since the parser can still backtrack
and consume the input via $p_2$. We will only add labels to $p_1$ in case
an expression of $p_1$ matches a unique lexical non-terminal. 
When annotating $p_2$, we pass the current value of $afterU$,
since there is no other alternative and thus it is safe to annotate
$p_2$ in case we have matched a unique lexical non-terminal before.
Whether both $seq$ and $afterU$ are \textbf{true},
we associate a label to the whole choice when it does not
match the empty string.

In case $p$ is a repetition $p_1*$ (lines 16--18), 
differently from Algorithm~\ref{alg:standard} and similarly
to the case we discussed before, we also call 
\texttt{\labexp} recursively when the repetition is not disjoint,
providing a false value in this case.

After applying Algorithm~\ref{alg:unique}, we could see we added,
as expected, only correct labels to the grammars we have been
discussing so far. In case of Titan, for example, we added 42
labels, while in case of Java we added 51 labels. To increase
the number of labels inserted, we will do some extra analysis
to determine whether when matching a given expression $p$
we are in a unique path (and thus we can annotate $p$) or not.

Below, we discuss some analyses we did to compute this unique
path. When evaluating Algorithm~\ref{alg:unique}, in Section~\ref{sec:evalcon},
we assume this extra analysis was performed:
\begin{itemize}
  \item \textbf{Unique Syntactical Non-Terminal:} 
  When an syntactical non-terminal $A$ is only used after we have already
  matched a unique lexical non-terminal, then we can also mark $A$ as unique
  and annotate its right-hand side. Both lexical and syntactical non-terminals
  can be marked as unique now, the main difference is that in case of a unique
  syntactical non-terminal this implies providing a true value for parameter
  $afterU$ when calling \texttt{\labexp} to annotate the right-hand side of $A$.

	\item \textbf{Unique Context:}
	If the lexical non-terminal $A$ is used more than once in grammar $G$
  but the set $S$ of tokens that may occur immediately before an usage of $A$
  is unique, i.e., $\forall s \in S$ we have that $s$ may not occur immediately
  before the other usages of $A$, then we can mark this instance of $A$ preceded
  by $S$ as unique.
\end{itemize}

In the next section, we compare the number of labels inserted by Algorithm~\ref{alg:unique}
with the number of labels inserted via manual annotation and by using Algorithm~\ref{alg:standard},
as also as the resulting error recovering parsers obtained via each approach.

\section{Evaluating the Conservative Insertion of Labels}
\label{sec:evalcon}

Table~\ref{tab:evallabuni} shows the amount of labels
inserted for the Titan, C, Pascal and Java grammars
when we used an automatic approach and when we used
manual annotation.

\begin{table}[t]
{
\small
\begin{subtable}[t]{0.48\textwidth}
\begin{tabular}{|c|c|c|c|}
      \hline
      Approach  & \ref{it:equal} &  \ref{it:new} & \ref{it:wrong} \\ \hline
      Manual    &  86  &  0  &  0  \\ \hline
      Standard  &  76  &  2  &  2  \\ \hline
      Unique    &  60  &  3  &  0  \\  \hline
    \end{tabular}
    \caption{Labels Inserted for Titan}
		\label{tab:evalunititan}
\end{subtable}
\hspace{\fill}
\begin{subtable}[t]{0.48\textwidth}
\flushright
\begin{tabular}{|c|c|c|c|}
      \hline
      Approach  & \ref{it:equal} &  \ref{it:new} & \ref{it:wrong} \\ \hline
      Manual    &  87  &  0  &  0  \\ \hline
      Standard  &  65  &  9  &  1  \\ \hline
      Unique    &  48  &  2  &  0  \\  \hline
    \end{tabular}
    \caption{Labels Inserted for C}
		\label{tab:evalunic}
\end{subtable}

\bigskip

\begin{subtable}[t]{0.48\textwidth}
\begin{tabular}{|c|c|c|c|}
      \hline
      Approach  & \ref{it:equal} &  \ref{it:new} & \ref{it:wrong} \\ \hline
      Manual    &  102  &  0  &  0  \\ \hline
      Standard  &  100  &  1  &  3  \\ \hline
      Unique    &   74  &  6  &  0  \\  \hline
    \end{tabular}
    \caption{Labels Inserted for Pascal}
		\label{tab:evalunipascal}
\end{subtable}
\hspace{\fill}
\begin{subtable}[t]{0.48\textwidth}
\flushright
\begin{tabular}{|c|c|c|c|}
      \hline
      Approach  & \ref{it:equal} &  \ref{it:new} & \ref{it:wrong} \\ \hline
      Manual    &  175 &  0  &  0  \\ \hline
      Standard  &  139 & 10  & 32  \\ \hline
      Unique    &   80 & 16  &  0  \\  \hline
    \end{tabular}
    \caption{Labels Inserted for Java}
		\label{tab:evalunijava}
\end{subtable}
}
\caption{Evaluation of the Labels Inserted by the Different Approaches for each Grammar}
\label{tab:evallabuni}
\end{table}

\begin{table}[t]
{
\scriptsize
\begin{subtable}[t]{0.48\textwidth}
     \begin{tabular}[t]{|c|c|c|c|c|c|}
      \hline
      Approach  &  Excel.  & Good  & Poor  & Awful   \\ \hline
      Manual    &  91\%    &  4\%   &  5\%  &  0\%   \\ \hline
      Standard  &  81\%    &  3\%   & 15\%  &  1\%   \\ \hline
      Unique    &  64\%    & 11\%   & 20\%  &  5\%   \\ \hline
    \end{tabular}
    \caption{Error Recovery for Titan}
		\label{tab:evalrecunititan}
\end{subtable}
\hspace{\fill}
\begin{subtable}[t]{0.48\textwidth}
    \begin{tabular}[t]{|c|c|c|c|c|c|}
      \hline
      Approach &  Excel.  & Good  & Poor   & Awful   \\ \hline
      Manual   &  87\%    &  7\%  &   5\%  &   2\%   \\ \hline
      Standard &  67\%    &  5\%  &  27\%  &   2\%    \\ \hline
      Unique   &  55\%    &  3\%  &   3\%  &  38\%    \\ \hline
    \end{tabular}
    \caption{Error Recovery for C}
		\label{tab:evalrecunic}
\end{subtable}

\bigskip

\begin{subtable}[t]{0.48\textwidth}
     \begin{tabular}[t]{|c|c|c|c|c|c|}
      \hline
      Approach  &  Excel.  & Good  & Poor  & Awful   \\ \hline
      Manual    &  80\%    & 11\%   &  9\%  &  1\%   \\ \hline
      Standard  &  80\%    & 11\%   &  9\%  &  1\%   \\ \hline
      Unique    &  61\%    & 15\%   & 24\%  & 11\%   \\ \hline
    \end{tabular}
    \caption{Error Recovery for Pascal}
		\label{tab:evalrecunipascal}
\end{subtable}
\hspace{\fill}
\begin{subtable}[t]{0.48\textwidth}
    \begin{tabular}[t]{|c|c|c|c|c|c|}
      \hline
      Approach &  Excel.  & Good   & Poor   & Awful  \\ \hline
      Manual   &  74\%    &  17\%  &   9\%  &   0\%   \\ \hline
      Standard &  63\%    &  16\%  &  21\%  &   0\%    \\ \hline
      Unique   &  37\%    &   3\%  &  56\%  &   4\%    \\ \hline
    \end{tabular}
    \caption{Error Recovery for Java}
		\label{tab:evalrecunijava}
\end{subtable}

\caption{Evaluating the Error Recovery for Each Approach.}
\label{tab:evalrecuni}
}
\end{table}

We can see that the manual approach is the one
that adds more labels for all the grammars we evaluated,
then comes Algorithm~\ref{alg:standard}, and finally
Algorithm~\ref{alg:unique}, which, as expected, 
did not insert labels incorrectly.

Overall, the amount of labels added by Algorithm~\ref{alg:unique},
when compared with manual annotation, ranged from 55\%,
in case of Java, to 78\%, in case of Pascal.
By its turn, when compared with Algorithm~\ref{alg:standard},
the Algorithm~\ref{alg:unique} was able to insert
between 64\%, in case of Java, and 81\%, in case
of Titan, of the labels inserted by it. 

In Table~\ref{tab:evalrecuni}, we can see that the
smaller amount of labels, and thus of recovery rules,
inserted by Algorithm~\ref{alg:unique} leads to a parser that performs
a poorer recovery when compared to the error recovering parsers based
on manual annotation and on Algorithm~\ref{alg:standard}.
In the best scenario, the Pascal grammar, Algorithm~\ref{alg:unique}
give us a parser that usually (in 84\% of the cases) performs an
acceptable recovery when the other two approaches do. In the worst scenario,
the Java grammar, the parser produced by Algorithm~\ref{alg:unique}
only performs an acceptable recovery in around half of the cases the
other approaches do.

Below, in Sections~\ref{sec:titanuni},~\ref{sec:cuni},~\ref{sec:pascaluni}
and~\ref{sec:javauni}, we discuss in more detail the use of
Algorithm~\ref{alg:unique} to annotate the grammar of each language.

\subsection{Titan}
\label{sec:titanuni}

\begin{figure*}
{\small
\begin{align*}
program     & \leftarrow  ( toplevelfunc \;/\; toplevelvar \;/\; toplevelrecord \;/\; import \;/\; foreign )* \;\,\labt{EOF}{Eof_2}\\
toplevelvar & \leftarrow  localopt\; decl\; \labst{ASSIGN}{AssignVar}\;  \labst{exp}{ExpVarDec}  \\
import      & \leftarrow  {\tt LOCAL}\; \labst{NAME}{NameImport}\; \labst{ASSIGN}{AssignImport} \\
& \;\;\;\;\; \;
                          \labst{IMPORT}{ImportImport}\;
                          (\,{\tt LPAR}\; \labt{STRING}{StringLParImport_3}\; \labt{RPAR}{RParImport_3} \;/\;\,
                          \labt{STRING}{StringImport_3}) \\
foreign      &  \leftarrow  {\tt LOCAL}\; \labt{NAME}{NameImport_2}\; \labt{ASSIGN}{AssignImport_2} \\
& \;\;\;\;\; \;
                          \labt{FOREIGN}{Foreign_2}\; \labt{IMPORT}{ImportImport_1}\;
                          (\,{\tt LPAR}\; \labt{STRING}{StringLParImport_1}\; \labt{RPAR}{RParImport_1} \;/\;\,
                          \labt{STRING}{StringImport_1}) \\
decl & \leftarrow {\tt NAME}\; (\,{\tt COLON}\; \labs{type}{TypeDecl})?  \\
localopt & \leftarrow {\tt LOCAL}?
\end{align*}
}
\caption{Excerpt of Titan Grammar Annotated by Algorithm~\ref{alg:unique}.}
\label{fig:titanimportuni}
\end{figure*}

As we have mentioned in Section~\ref{sec:titanstandard},
the Titan grammar has 7 rules with non-$LL(1)$ choices
and 3 rules with non-$LL(1)$ repetitions.
After applying Algorithm~\ref{alg:unique}, we got
a generated grammar~\footnote{\url{http://bit.ly/titan-unique-scp}}
with 63 labels (around 75\% of the labels added by manual annotation).

Algorithm~\ref{alg:unique} initially identified 44
unique lexical elements in the Titan grammar.
Since that Algorithm~\ref{alg:unique} can annotate the first
alternative of a non-disjoint choice when this alternative has a unique
non-terminal that consumes input, we could add label \texttt{CastMissingType}
in the rule below, where {\tt AS} is a unique lexical non-terminal:
\begin{align*}
castexp & \leftarrow simpleexp\; {\tt AS}\; \labt{type}{CastMissingType} \;/\; simpleexp
\end{align*}

Figure~\ref{fig:titanimportuni} shows an excerpt of
Titan grammar that we discussed in Section~\ref{sec:titanstandard},
without the predicates added by manual annotation.
Non-terminal {\tt FOREIGN} is unique, thus we can annotate the symbols that follow it.
In Figure~\ref{fig:titanimportuni},
we represented as $l_1$ the labels added due to the uniqueness
of {\tt FOREIGN}. 

To get a successful match, the start non-terminal must
succeed, so we mark {\it program} as unique and thus we can annotate
its right-hand side. In case of a repetition $p_1*$, expression
$p_1$ should match at least one token before we can annotate it.
As $p_1$ is a choice, where the same rationale for $p_1*$ applies,
we can not add labels to the alternatives in rule {\it program}.

The syntactical non-terminals {\it toplevelvar}, {\it import}
and {\it foreign} are only used in rule {\it program},
so we could also mark them as unique and annotate
their right-hand side. However, we will only annotate
the right-hand side of {\it foreign}, because it is the
last alternative of the non-disjoint choice in rule
{\it program} involving these non-terminals.
By marking {\it foreign} as unique, we can add
the labels represented as $l_2$ in Figure~\ref{fig:titanimportuni}.

Finally, we can see in Figure~\ref{fig:titanimportuni}
that the lexical non-terminal {\tt IMPORT} is used twice,
so it is not unique. However, each use of {\tt IMPORT}
is preceded by a different context. In rule {\it import},
{\tt ASSIGN} comes before {\tt IMPORT}, while in rule
{\it foreign} it is {\tt FOREIGN} that precedes {\tt IMPORT}.
As we have different contexts, in rule {\it import} we can
add the labels represented as $l_3$. 

The labels represented as $\cancel{lab}$ were added by
Algorithm~\ref{alg:standard} but not by Algorithm~\ref{alg:unique}.
As we discussed in Section~\ref{sec:titanstandard},
labels {\tt ExpVarDec}, in rule {\it toplevelvar},
and {\tt ImportImport}, in rule {\it import}, should
have not been added by Algorithm~\ref{alg:standard}.
As expected, Algorithm~\ref{alg:unique} did not insert
these labels incorrectly.
We should notice that non-terminal {\tt COLON} is used
in other rules of the grammar, which were omitted here,
so it is not unique.

The error recovering parser generated by Algorithm~\ref{alg:unique}
did an acceptable recovery for $75\%$ of the test programs, while
by manually annotating the grammar we could get an acceptable
recovery for $95\%$ of them.

\subsection{C}
\label{sec:cuni}

Our unlabeled C grammar has 50 syntactical rules,
from which 17 have non-$LL(1)$ choices and 5 have
non-$LL(1)$ repetitions. Algorithm~\ref{alg:unique}
was able to generate an error recovering
parser~\footnote{\url{http://bit.ly/c89-uniquescp}}
with 50 labels.
 
In case of our C grammar, Algorithm~\ref{alg:unique}
added only 58\% of the amount of labels inserted manually,
while for Titan it could add 73\% of this amount.
The higher occurrence of non-disjoint expressions in the
C grammar makes more difficult to mark symbols as unique
and also to propagate a unique path after we have seen
a unique symbol. In such grammars, to insert more labels
it seems we need to do a more sophisticated analysis
when computing the unique non-terminals.

Below, we revisit the {\it if-else} statement presented
in Figure~\ref{fig:cifelse} and discuss an extra analyses
we did to mark an usage of {\tt IF} as
unique and helped us to achieve the amount of 50 labels. 
In Figure~\ref{fig:cifelseuni}, we used
$\cancel{lab}$ to represent a label added by manual
annotation but not by Algorithm~\ref{alg:unique}:

\begin{figure*}
{\small
\begin{align*}
stat & \leftarrow    {\tt IF}\; \labst{LPAR}{BrackIf}\; \labs{exp}{InvalidExpr}\; \labst{RPAR}{Brack}\; \labs{stat}{Stat}\; {\tt ELSE}\; \lab{stat}{Stat_1} \\
     &  \;\; / \;\;  {\tt IF}\; \labt{LPAR}{BrackIf_2}\; \lab{exp}{InvalidExpr_2}\; \labt{RPAR}{Brack_2}\; \lab{stat}{Stat_2} 
\end{align*}
}
\caption{{\it if-else} Statement Annotated by Algorithm~\ref{alg:unique}} 
\label{fig:cifelseuni}
\end{figure*}

Initially, the only unique non-terminal is {\tt ELSE}, which allows
us to add just label ${\tt Stat_1}$. Non-terminal {\tt IF} was not
considered unique at first because it is used twice, and both uses
are preceded by the same context. To be able to annotate the second
alternative of a non-disjoint choice such this, we check if the two
usages of a non-terminal $A$ with a context in common occur in the same
right-hand side. If it is the case, we mark the last usage as
unique. After doing this, we could add the labels represented as
$l_2$ in Figure~\ref{fig:cifelseuni}.

We can see in Table~\ref{tab:evalrecunic} that the parser
generated by Algorithm~\ref{alg:unique} performed an acceptable
recovery for 58\% of the test files, while by manually annotating
the grammar we got a 94\% rate of acceptable recovery
for the same test files.

\subsection{Pascal}
\label{sec:pascaluni}

As mentioned in Section~\ref{sec:pascalstandard},
only 10 syntactical rules, out of 67, from the Pascal
grammar have either a non-disjoint choice or a
a non-disjoint repetition. Because of this low
number of non-disjoint expressions, the recursive
banning approach discussed in Section~\ref{sec:ban}
can annotate the Pascal grammar with 36 labels.
By its turn, Algorithm~\ref{alg:unique} was able to
add 72 labels.

As there are eight labels which were only added by
the banning approach, in case of Pascal we automatically
generated an error recovering parser~\footnote{\url{http://bit.ly/pascal-unique-scp}}
which joins the labels added by these two approaches
and thus has 80 labels.

We can see in Table~\ref{tab:evalunipascal} that
Algorithm~\ref{alg:unique} only inserted correct
labels and was able to insert around 80\% of labels
added manually. In Table~\ref{tab:evalrecunipascal} we
can see the resulting error recovering parser
performs an acceptable recovery for 76\% of the
test files, while the parsers based on the other
approaches perform such recovery for 91\% of the
test files.

A manual inspection revealed that the parser generated
by Algorithm~\ref{alg:unique} built an AST with less information
than the parser generated by Algorithm~\ref{alg:standard} 
for the files related to a label inserted only
by Algorithm~\ref{alg:standard}, which shows we got a poorer
recovery in these cases due to the missing labels.

\subsection{Java}
\label{sec:javauni}

In case of our unlabeled Java grammar, where there is a
non-disjoint expression in one third of the 147 grammar rules,
Algorithm~\ref{alg:unique} generated a grammar~\footnote{\url{http://bit.ly/java8-unique-scp}}
with 96 labels. 

As was the case in our C grammar (Section~\ref{sec:cuni}),
the higher occurrence of non-disjoint expressions in
the Java grammar makes more difficult to annotate it.
In case of Algorithm~\ref{alg:standard}, this leaded to
adding 32 labels incorrectly, while in case of the recursive
banning approach discussed in Section~\ref{sec:ban} this
resulted in not adding a single label to the Java grammar.
As Table~\ref{tab:evalunijava} shows,
Algorithm~\ref{alg:unique} was able to add 55\% of
the amount of labels added manually, without
inserting labels incorrectly.

From Table~\ref{tab:evalrecunijava}, we can see that the
error recovering parser generated by Algorithm~\ref{alg:unique}
only performed an acceptable recovery for 40\% of
the test files. This result was somehow expected,
since that the algorithm failed to add many labels that
were inserted during the manual annotation.

\section{Related Work}
\label{sec:rel}

In this section, we discuss some error reporting and recovery approaches
described in the literature or implemented by parser generators.
Overall, a distinctive feature of our approach is that our
error recovery mechanism is integrated with the recognizing
formalism (PEGs, in our case). 

Swierstra~\cite{swierstra2001combinator}
shows a sophisticated implementation of parser combinators for
error recovery. The recovery strategy uses information
about the tails of the pending rules in the parser stack.
When the parser fails to match a given symbol it may insert
this symbol or to remove the current input symbol.

Our approach cannot simulate this recovery strategy, as it relies
on the path that the parser dynamically took to reach the point of
the error, while our recovery expressions are statically determined
from the label. In Swiertra's approach, in case the right-hand side
of the rules are not in some normal form, the parser may have
a high memory consumption.
 
A popular error reporting approach applied for bottom-up parsing
is based on associating an error message to a parse state and a
lookahead token~\cite{jeffery2003lr}. To determine the error
associated to a parse state, it is necessary first to manually
provide a sequence of tokens that lead the parser to that failure state.
We can simulate this technique with the use of labels. By using
labels we do not need to provide a sample invalid program for
each label, but we need to annotate the grammar properly.

The error recovery approach for predictive top-down parsers
proposed by Wirth~\cite{wirth1978algorithms} was a major influence
for several tools. In Wirth's approach, when there
is an error during the matching of a non-terminal $A$, we try to synchronize
by using the symbols that can follow $A$ plus the symbols that can
follow any non-terminal $B$ that we are currently trying to match
(the procedure associated with $B$ is on the stack). Moreover, the
tokens which indicate the beginning of a structured element
(e.g., {\tt while}, {\tt if}) or the beginning of a declaration
(e.g., {\tt var}, {\tt function}) are used to synchronize with
the input.

Our approach can simulate this recovery strategy just partially,
because similarly to~\cite{swierstra1996dec} it relies on information
that will be available only during the parsing. We can define a recovery
expression for a non-terminal $A$ according to Wirth's idea, however,
as we do not know statically how will be the stack when trying to match $A$,
the recovery expression of $A$ would use the $\flw$ sets of all
non-terminals whose right-hand side have $A$, and could possibly
be on the stack.

Coco/R~\cite{cocomanual} is a tool that generates predictive
$LL(k)$ parsers. As the parsers based on Coco/R do not backtrack,
an error is signaled whenever a failure occurs. In case of PEGs,
as a failure may not indicate an error, but the need to backtrack,
in our approach we need to annotate a grammar with labels, a task
we tried to make more automatic.

In Coco/R, in case of an error the parser reports it and
continues until reaching a {\it synchronization point}, which
can be specified in the grammar by the user through the use of a keyword
{\tt SYNC}. Usually, the beginning of a statement or
a semicolon are good synchronization points. 

Another complementary mechanism used by Coco/R for error
recovery is {\emph weak} tokens, which can be defined by
a user though the {\tt WEAK} keyword. A weak token is one that
is often mistyped or missing, as a comma in a parameter list,
which is frequently mistyped as a semicolon. When  the parser fails
to recognize a weak token, it tries to resume parsing
based also on tokens that can follow the weak one.

Labeled failures plus recovery expressions can simulate the
{\tt SYNC} and {\tt WEAK} keywords of Coco/R. 
Each use of {\tt SYNC} keyword would correspond to
a recovery expression that advances the input 
to that point, and this recovery expression would
be used for all labels in the parsing extent of this
synchronization point.
A weak token can have a recovery expression that
tries also to synchronize on its $\flw$ set.

Coco/R avoids spurious error messages during
synchronization by only reporting an error if at least two tokens
have been recognized correctly since the last error. This is easily
done in labeled PEG parsers through a separate post-processing step.

ANTLR~\cite{antlrsite,parr2013antlr} is a popular tool
for generating top-down parsers. 
ANTLR automatically generates from a grammar description a parser
with error reporting and
recovery mechanisms, so the user does not need to annotate
the grammar. After an error, ANTLR parses the
entire input again to determine the error,
which can lead to a poor performance when
compared to our approach~\cite{medeiros2018sac}.

As its default recovery strategy, ANTLR attempts single
token insertion and deletion to synchronize with the input. In case the 
remaining input can not be matched by any production of the 
current non-terminal, the parser consumes the input 
``\textit{until it finds a token that could reasonably follow
the current non-terminal}''~\cite{parr2014antlr}.
ANTLR allows to modify the default error recovery approach,
however, it does not seem to encourage the definition of a 
recovery strategy for a particular error,
the same recovery approach is commonly used for the whole
grammar.

A common way to implement error recovery in PEG parsers
is to add an alternative to a failing expression,
where this new alternative works as a fallback. Semantic actions
are used for logging the error.
This strategy is mentioned in the manual of Mouse~\cite{redzmouse}
and also by users of LPeg~\footnote{See 
\url{http://lua-users.org/lists/lua-l/2008-09/msg00424.html}}.
These fallback expressions with semantic actions for error logging
are similar to our recovery expressions and labels, but in an ad-hoc,
implementation-specific way.

Several PEG implementations such as
Parboiled~\footnote{\url{https://github.com/sirthias/parboiled/wiki}},
Tatsu~\footnote{\url{https://tatsu.readthedocs.io}},
and PEGTL~\footnote{\url{https://github.com/taocpp/PEGTL}} provide 
features that facilitate error recovery.

The previous version of Parboiled used an error recovery
strategy based on ANTLR's one, and requires parsing
the input two or three times in case of an error.
Similar to ANTLR, the strategy used by Parboiled 
was fully automated, and required neither manual intervention
nor annotations in the grammar. Unlike ANTLR, it
was not possible to modify the default error strategy.
The current version of
Parboiled~\footnote{\url{https://github.com/sirthias/parboiled2/}}
does not has an error recovery mechanism.

Tatsu uses the fallback alternative technique for error
recovery, with the addition of a {\it skip expression},
which is a syntactic sugar for defining a pattern that
consumes the input until the skip expression succeeds. 

PEGTL (version 3) makes a distinction between a local failure
(a regular failure), and a global failure, which
is equivalent to throwing a label.	
The PEGTL user can use a function \texttt{raise} to produce
a global failure, which is similar to annotate
a grammar with labels. 

Mizushima et al.~\cite{mizushima2010php} proposed the use of a cut
operator, borrowed from Prolog, to avoid unnecessary
backtracking in PEG parsers, and propose an automatic way to
insert this operator in a grammar. Differently from the
throw operator, which leads to a global failure, in case
there is no recovery rule, the cut operator just discards
the next alternative of the current choice, which makes
difficult the use of cut operators to signal an
error as the parser can still backtrack.
The algorithm proposed by~\cite{mizushima2010php}
to insert the cut operator is similar to 
Algorithm~\ref{alg:standard}. However, the former algorithm
seems to do less insertions, as it does not annotate the second
alternative of a choice, since there is no local backtracking
to discard in this case.

Rüfenacht~\cite{michael2016error} proposes a local
error handling strategy for PEGs. This strategy uses
the farthest failure position and a record of the parser
state to identify an error. Based on the information
about an error, an appropriate recovery set is used.
This set is formed by parsing expressions that match
the input at or after the error location, and it is used
to determine how to repair the input. 

The approach proposed by Rüfenacht is also similar to the use
of a recovery expression after an error, but more limited
in the kind of recovery that it can do. When testing his approach in
the context of a JSON grammar, which is simpler than grammar we
analyzed, Rüfenacht noticed long running test cases and mentions
the need to improve memory use and other performance issues.

The evaluation of our error recovery technique was
based on Pennelo and DeRemmer's~\cite{pennello1978forward} strategy,
which evaluates the quality of an error recovery approach based
on the similarity of the program obtained after recovery with the
intended program (without syntax errors). This quality measure
was used to evaluate several
strategies~\cite{corchuelo2002repair,degano1995comparison,dejonge2012natural},
although it is arguably subjective~\cite{dejonge2012natural}.
 
Differently from  Pennelo and DeRemmer's approach, 
we did not compare programming texts, we compared the AST 
from an erroneous program after recovery with the AST of
what would be an equivalent correct program.

\section{Conclusion}
\label{sec:conc}

We proposed algorithms to automate the process
of adding error reporting and error recovery to parsers
based on Parsing Expression Grammars.
These algorithms annotate a PEG with error labels and associate
recovery expressions for these labels.

We evaluated such algorithms on the grammars of four programming
languages: Titan, C, Pascal and Java. For all these languages,
we build a test suite both for valid and erroneous input.

Algorithm~\ref{alg:standard} could add to these grammars at
least 75\% of the labels added manually. The error recovering
parser we got produced an acceptable recovery for at least 70\%
of the syntactically invalid files of each language.

The major limitation of Algorithm~\ref{alg:standard} is that it can annotate
the right-hand side of a non-terminal $A$ that is used either in
a non-$LL(1)$ choice or in a non-$LL(1)$ repetition. This may
prevent the parser from backtrack and recognize a valid input,
thus changing the grammar language.

To address this issue, we proposed
Algorithm~\ref{alg:unique}, which uses a more
conservative approach, based on the idea of
unique non-terminals. By using it,
we inserted only correct labels and
got an acceptable recovery rate that
ranged from 41\% to 76\%. 

We have also discussed how the rewriting of some 
grammar rules could lead both algorithms to produce
a better result.

The automatic insertion of labels 
provides a good generic error reporting
mechanism. To get more specific error messages,
the parser developer just needs to associate an
error message with each inserted label.

It is easy to adapt our algorithms to use a different
error recovery strategy, which can also be defined
after inserting the labels. It is also possible to
adapt them to work on grammars that have already been
partially annotated, either with just labels or labels and
recovery expressions, as well as marking the parts of the
grammar the algorithm should ignore and that will be
annotated by hand by the parser developer.

To generate a more robust error recovering parsing,
the approach based on unique tokens should insert more
labels. One way to achieve this is by using the derivative
of a PEG~\cite{moss2018derivatives,warth2018derivatives}
to automatically generate valid inputs based on a grammar
without annotations. After applying Algorithm~\ref{alg:unique},
we could repeatedly try to insert a label added only by Algorithm~\ref{alg:standard}
and use the valid input generated through derivatives
to determine whether the insertion of this label is
correct or not.

As a future work, we should also explore other grammar
analysis that may lead Algorithm~\ref{alg:unique} to insert
more correct labels.

Moreover, we may investigate the use of some normal form
when writing a PEG grammar to help our algorithms to produce
a better result, without imposing too much restrictions for
a grammar writer.

Finally, as the use of labeled failures may avoid
unnecessary backtracking, we should also analyze the
performance of the generated parsers.

\section*{References}

\bibliography{scpsblp2018}

\begin{thebibliography}{10}
\expandafter\ifx\csname url\endcsname\relax
  \def\url#1{\texttt{#1}}\fi
\expandafter\ifx\csname urlprefix\endcsname\relax\def\urlprefix{URL }\fi
\expandafter\ifx\csname href\endcsname\relax
  \def\href#1#2{#2} \def\path#1{#1}\fi

\bibitem{ford2004peg}
B.~Ford, Parsing expression grammars: A recognition-based syntactic foundation,
  in: Proceedings of the 31st ACM SIGPLAN-SIGACT Symposium on Principles of
  Programming Languages, POPL '04, ACM, New York, NY, USA, 2004, pp. 111--122.

\bibitem{maidl2013peglabel}
A.~M. Maidl, F.~Mascarenhas, R.~Ierusalimschy,
  \href{http://dx.doi.org/10.1007/978-3-642-40922-6\_1}{Exception handling for
  error reporting in parsing expression grammars}, in: A.~R. Du~Bois,
  P.~Trinder (Eds.), Programming Languages, Vol. 8129 of LNCS, Springer Berlin
  Heidelberg, 2013, pp. 1--15.
\newblock \href {http://dx.doi.org/10.1007/978-3-642-40922-6\_1}
  {\path{doi:10.1007/978-3-642-40922-6\_1}}.
\newline\urlprefix\url{http://dx.doi.org/10.1007/978-3-642-40922-6\_1}

\bibitem{maidl2016peglabel}
A.~M. Maidl, F.~Mascarenhas, S.~Medeiros, R.~Ierusalimschy,
  \href{https://doi.org/10.1016/j.scico.2016.08.004}{Error reporting in parsing
  expression grammars}, Sci. Comput. Program. 132~(P1) (2016) 129--140.
\newblock \href {http://dx.doi.org/10.1016/j.scico.2016.08.004}
  {\path{doi:10.1016/j.scico.2016.08.004}}.
\newline\urlprefix\url{https://doi.org/10.1016/j.scico.2016.08.004}

\bibitem{medeiros2018sac}
S.~Medeiros, F.~Mascarenhas, Syntax error recovery in parsing expression
  grammars, in: Proceedings of the 33st Annual ACM Symposium on Applied
  Computing, SAC '18, ACM, New York, NY, USA, 2018, pp. 1195--1202.

\bibitem{medeiros2018visual}
S.~Q. de~Medeiros, F.~Mascarenhas,
  \href{http://www.sciencedirect.com/science/article/pii/S1045926X18301897}{Error
  recovery in parsing expression grammars through labeled failures and its
  implementation based on a parsing machine}, Journal of Visual Languages \&
  Computing 49 (2018) 17 -- 28.
\newblock \href {http://dx.doi.org/https://doi.org/10.1016/j.jvlc.2018.10.003}
  {\path{doi:https://doi.org/10.1016/j.jvlc.2018.10.003}}.
\newline\urlprefix\url{http://www.sciencedirect.com/science/article/pii/S1045926X18301897}

\bibitem{medeiros2018sblp}
S.~Q. de~Medeiros, F.~Mascarenhas,
  \href{http://doi.acm.org/10.1145/3264637.3264638}{Towards automatic error
  recovery in parsing expression grammars}, in: Proceedings of the XXII
  Brazilian Symposium on Programming Languages, SBLP '18, ACM, New York, NY,
  USA, 2018, pp. 3--10.
\newblock \href {http://dx.doi.org/10.1145/3264637.3264638}
  {\path{doi:10.1145/3264637.3264638}}.
\newline\urlprefix\url{http://doi.acm.org/10.1145/3264637.3264638}

\bibitem{ford2002packrat}
B.~Ford, Packrat parsing: a practical linear-time algorithm with backtracking,
  Master's thesis, Massachusetts Institute of Technology (September 2002).

\bibitem{redz09}
R.~R. Redziejowski, Applying classical concepts to parsing expression grammar,
  Fundamenta Informaticae 93 (2009) 325--336.

\bibitem{redz14}
R.~R. Redziejowski, \href{http://dx.doi.org/10.3233/FI-2014-1075}{More about
  converting {BNF} to {PEG}}, Fundamenta Informaticae 133 (2014) 257--270.
\newblock \href {http://dx.doi.org/10.3233/FI-2014-1075}
  {\path{doi:10.3233/FI-2014-1075}}.
\newline\urlprefix\url{http://dx.doi.org/10.3233/FI-2014-1075}

\bibitem{mascarenhas2014}
F.~Mascarenhas, S.~Medeiros, R.~Ierusalimschy,
  \href{http://www.sciencedirect.com/science/article/pii/S0167642314000276}{On
  the relation between context-free grammars and parsing expression grammars},
  Science of Computer Programming 89 (2014) 235 -- 250.
\newblock \href {http://dx.doi.org/https://doi.org/10.1016/j.scico.2014.01.012}
  {\path{doi:https://doi.org/10.1016/j.scico.2014.01.012}}.
\newline\urlprefix\url{http://www.sciencedirect.com/science/article/pii/S0167642314000276}

\bibitem{titan}
T.~T. Developers, The {T}itan programming language,
  \url{http://titan-lang.org/}, [Visited on May 2018] (2017).

\bibitem{lua}
R.~Ierusalimschy, Programming in Lua, 4th Edition, Lua.Org, 2016.

\bibitem{grune2010ptp}
D.~Grune, C.~J. Jacobs, Parsing Techniques: A Practical Guide, 2nd Edition,
  Springer Publishing Company, Incorporated, 2010.

\bibitem{kernighan1989c}
B.~W. Kernighan, D.~M. Ritchie, The C Programming Language, Prentice Hall
  Press, Upper Saddle River, NJ, USA, 1988, the complete C grammar is given in
  Section A13.

\bibitem{pascaliso1990}
{ISO Central Secretary},
  \href{https://www.iso.org/standard/13802.html}{{Information technology --
  Programming languages -- Pascal}}, Standard ISO/IEC 7185:1990(E),
  International Organization for Standardization, Geneva, CH, the complete
  Pascal grammar is given in Annex A. (1991).
\newline\urlprefix\url{https://www.iso.org/standard/13802.html}

\bibitem{swierstra2001combinator}
S.~D. Swierstra, Combinator parsers: From toys to tools, Electronic Notes in
  Theoretical Computer Science 41~(1) (2001) 38--59.

\bibitem{jeffery2003lr}
C.~L. Jeffery, Generating {LR} syntax error messages from examples, ACM
  Transactions on Programming Languages and Systems (TOPLAS) 25~(5) (2003)
  631--640.

\bibitem{wirth1978algorithms}
N.~Wirth, Algorithms + Data Structures = Programs, Prentice Hall PTR, Upper
  Saddle River, NJ, USA, 1978.

\bibitem{swierstra1996dec}
S.~D. Swierstra, L.~Duponcheel, Deterministic, error-correcting combinator
  parsers, in: Advanced Functional Programming, Vol. 1129 of Lecture Notes in
  Computer Science, Springer, 1996, pp. 184--207.

\bibitem{cocomanual}
H.~Mössenböck,
  \href{http://www.ssw.uni-linz.ac.at/Coco/Doc/UserManual.pdf}{The Compiler
  Generator {C}oco/{R}} (2010).
\newline\urlprefix\url{http://www.ssw.uni-linz.ac.at/Coco/Doc/UserManual.pdf}

\bibitem{antlrsite}
T.~Parr, {ANTLR}, \url{http://www.antlr.org}, [Visited on May 2018] (2014).

\bibitem{parr2013antlr}
T.~Par, The Definitive {ANTLR} 4 Reference, 2nd Edition, Pragmatic Bookshelf,
  2013.

\bibitem{parr2014antlr}
T.~Parr, S.~Harwell, K.~Fisher,
  \href{http://doi.acm.org/10.1145/2660193.2660202}{Adaptive ll(*) parsing: The
  power of dynamic analysis}, in: Proceedings of the 2014 ACM International
  Conference on Object Oriented Programming Systems Languages \& Applications,
  OOPSLA '14, ACM, New York, NY, USA, 2014, pp. 579--598.
\newblock \href {http://dx.doi.org/10.1145/2660193.2660202}
  {\path{doi:10.1145/2660193.2660202}}.
\newline\urlprefix\url{http://doi.acm.org/10.1145/2660193.2660202}

\bibitem{redzmouse}
R.~R. Redziejowski, \href{http://mousepeg.sourceforge.net/Manual.pdf}{Mouse:
  from {P}arsing {E}xpressions to a {P}ractial {P}arser} (2017).
\newline\urlprefix\url{http://mousepeg.sourceforge.net/Manual.pdf}

\bibitem{mizushima2010php}
K.~Mizushima, A.~Maeda, Y.~Yamaguchi, Packrat parsers can handle practical
  grammars in mostly constant space, in: Proceedings of the 9th ACM
  SIGPLAN-SIGSOFT Workshop on Program Analysis for Software Tools and
  Engineering, PASTE '10, ACM, New York, NY, USA, 2010, pp. 29--36.

\bibitem{michael2016error}
M.~Rüfenacht, Error handling in peg parsers, Master's thesis, University of
  Berne (2016).

\bibitem{pennello1978forward}
T.~J. Pennello, F.~DeRemer, \href{http://doi.acm.org/10.1145/512760.512786}{A
  forward move algorithm for lr error recovery}, in: Proceedings of the 5th ACM
  SIGACT-SIGPLAN Symposium on Principles of Programming Languages, POPL '78,
  ACM, New York, NY, USA, 1978, pp. 241--254.
\newblock \href {http://dx.doi.org/10.1145/512760.512786}
  {\path{doi:10.1145/512760.512786}}.
\newline\urlprefix\url{http://doi.acm.org/10.1145/512760.512786}

\bibitem{corchuelo2002repair}
R.~Corchuelo, J.~A. P{\'e}rez, A.~Ruiz, M.~Toro,
  \href{http://doi.acm.org/10.1145/586088.586092}{Repairing syntax errors in lr
  parsers}, ACM Trans. Program. Lang. Syst. 24~(6) (2002) 698--710.
\newblock \href {http://dx.doi.org/10.1145/586088.586092}
  {\path{doi:10.1145/586088.586092}}.
\newline\urlprefix\url{http://doi.acm.org/10.1145/586088.586092}

\bibitem{degano1995comparison}
P.~Degano, C.~Priami,
  \href{http://dx.doi.org/10.1002/spe.4380250606}{Comparison of syntactic error
  handling in lr parsers}, Softw. Pract. Exper. 25~(6) (1995) 657--679.
\newblock \href {http://dx.doi.org/10.1002/spe.4380250606}
  {\path{doi:10.1002/spe.4380250606}}.
\newline\urlprefix\url{http://dx.doi.org/10.1002/spe.4380250606}

\bibitem{dejonge2012natural}
M.~de~Jonge, L.~C.~L. Kats, E.~Visser, E.~S\"{o}derberg,
  \href{http://doi.acm.org/10.1145/2400676.2400678}{Natural and flexible error
  recovery for generated modular language environments}, ACM Trans. Program.
  Lang. Syst. 34~(4) (2012) 15:1--15:50.
\newblock \href {http://dx.doi.org/10.1145/2400676.2400678}
  {\path{doi:10.1145/2400676.2400678}}.
\newline\urlprefix\url{http://doi.acm.org/10.1145/2400676.2400678}

\bibitem{moss2018derivatives}
A.~Moss, \href{http://arxiv.org/abs/1808.08893}{Simplified parsing expression
  derivatives}, CoRR abs/1808.08893.
\newblock \href {http://arxiv.org/abs/1808.08893} {\path{arXiv:1808.08893}}.
\newline\urlprefix\url{http://arxiv.org/abs/1808.08893}

\bibitem{warth2018derivatives}
T.~Garnock{-}Jones, M.~Eslamimehr, A.~Warth,
  \href{http://arxiv.org/abs/1801.10490}{Recognising and generating terms using
  derivatives of parsing expression grammars}, CoRR abs/1801.10490.
\newblock \href {http://arxiv.org/abs/1801.10490} {\path{arXiv:1801.10490}}.
\newline\urlprefix\url{http://arxiv.org/abs/1801.10490}

\end{thebibliography}

\end{document}